\newtheorem{lemma}{Lemma}
\newtheorem{proof}{Proof}
\newcommand{\pb}{p_{\mathsf{b}}}
\newcommand{\pt}{p_{\mathsf{t}}}
\newcommand{\thetat}{\theta_{\mathsf{t}}}
\newcommand{\thetar}{\theta_{\mathsf{r}}}
\newcommand{\Gt}{G_{\mathsf{t}}}
\newcommand{\gt}{g_{\mathsf{t}}}
\newcommand{\Gr}{G_{\mathsf{r}}}
\newcommand{\gr}{g_{\mathsf{r}}}
\newcommand{\alphaL}{\alpha_{\mathsf{L}}}
\newcommand{\alphaN}{\alpha_{\mathsf{N}}}
\newcommand{\mN}{m_{\mathsf{N}}}
\newcommand{\mL}{m_{\mathsf{L}}}
\newcommand{\Nt}{N_{\mathsf{t}}}
\newcommand{\Nr}{N_{\mathsf{r}}}
\newcommand{\rout}{r_{\mathsf{out}}}
\newcommand{\rin}{r_{\mathsf{in}}}
\newcommand{\RB}{R_{\mathsf{B}}}
\newcommand{\pM}{p_{\mathsf{M}}}
\newcommand{\be}{\begin{eqnarray}}
\newcommand{\ee}{\end{eqnarray}}
\def\blfootnote{\xdef\@thefnmark{}\@footnotetext}
\begin{document}
\hyphenation{multi-symbol}
\title{Device-to-Device Millimeter Wave Communications: Interference, Coverage, Rate, and Finite Topologies}
\author{ Kiran Venugopal, Matthew C. Valenti, and Robert W. Heath, Jr. \\ 
\thanks{Preliminary results related to this paper were presented at the 2015 Information Theory and Applications (ITA) Workshop \cite{mmWave:2015}. This work was supported in part by the Intel 5G program and the National Science Foundation under Grant No. NSF-CCF-1319556.  M.C. Valenti was supported by the Big-XII Faculty Fellowship program. Kiran Venugopal and Robert W. Heath, Jr. are with the University of Texas, Austin, TX, USA. Matthew C. Valenti is with West Virginia University, Morgantown, WV, USA. Email: \tt{kiranv@utexas.edu, valenti@ieee.org, rheath@utexas.edu}}
}
\date{}
\maketitle

\vspace{-1.5cm}
\thispagestyle{empty}
\begin{abstract}
Emerging applications involving device-to-device communication among wearable electronics require Gbps throughput, which can be achieved by utilizing millimeter wave (mmWave) frequency bands. When many such communicating devices are indoors in close proximity, like in a train car or airplane cabin, interference can be a serious impairment. This paper uses stochastic geometry to analyze the performance of mmWave networks with a finite number of interferers in a finite network region. Prior work considered either lower carrier frequencies with different antenna and channel assumptions, or a network with an infinite spatial extent. In this paper, human users not only carry potentially interfering devices, but also act to block interfering signals. Using a sequence of simplifying assumptions, accurate expressions for coverage and rate are developed that capture the effects of key antenna characteristics like directivity and gain, and are a function of the finite area and number of users. The assumptions are validated through a combination of analysis and simulation. The main conclusions are that mmWave frequencies can provide Gbps throughput even with omni-directional transceiver antennas, and larger, more directive antenna arrays give better system performance.  
\end{abstract}

\section{Introduction}
\label{sec:Intro}
Wearable devices are positioned to become part of everyday life, whether be it in the realm of healthcare, the workplace, or infotainment \cite{wearable_insight, IDC_forecast}. Mobile wearables open up unique challenges in terms of power consumption, heat dissipation, and networking \cite{Starner2014}.  From a wireless communications perspective, wearable communication networks are the next frontier for device-to-device (D2D) communication \cite{Juniper}. Wearable networks connect different devices in and around the human body including low-rate devices like pedometers and high-rate devices like augmented- or mixed-reality glasses. With the availability of newer commercial products, it seems feasible that many people will soon have multiple wearable devices \cite{Winter:2015}, as illustrated in Fig. \ref{fig:wearble_network}. Such a wearable network \textit{around} an individual may need to operate effectively in the presence of interference from other users' wearable networks. This is problematic for applications that require Gbps throughput like virtual reality or augmented displays. The urban train car will be a particularly bad environment with a high density of independent wearable networks located in close proximity \cite{trainDensity} as illustrated in Fig. \ref{fig:train_car}. Understanding the interference environment is critical to understanding the achievable rate and quality-of-experience that can be supported by wearable communication networks as well as the feasible density of such networks. 

\begin{figure}
\centering
\subfigure[Example wearable communication network. The  user's smartphone can act as a coordinating hub for the wearable network.]{
\includegraphics[scale=0.25]{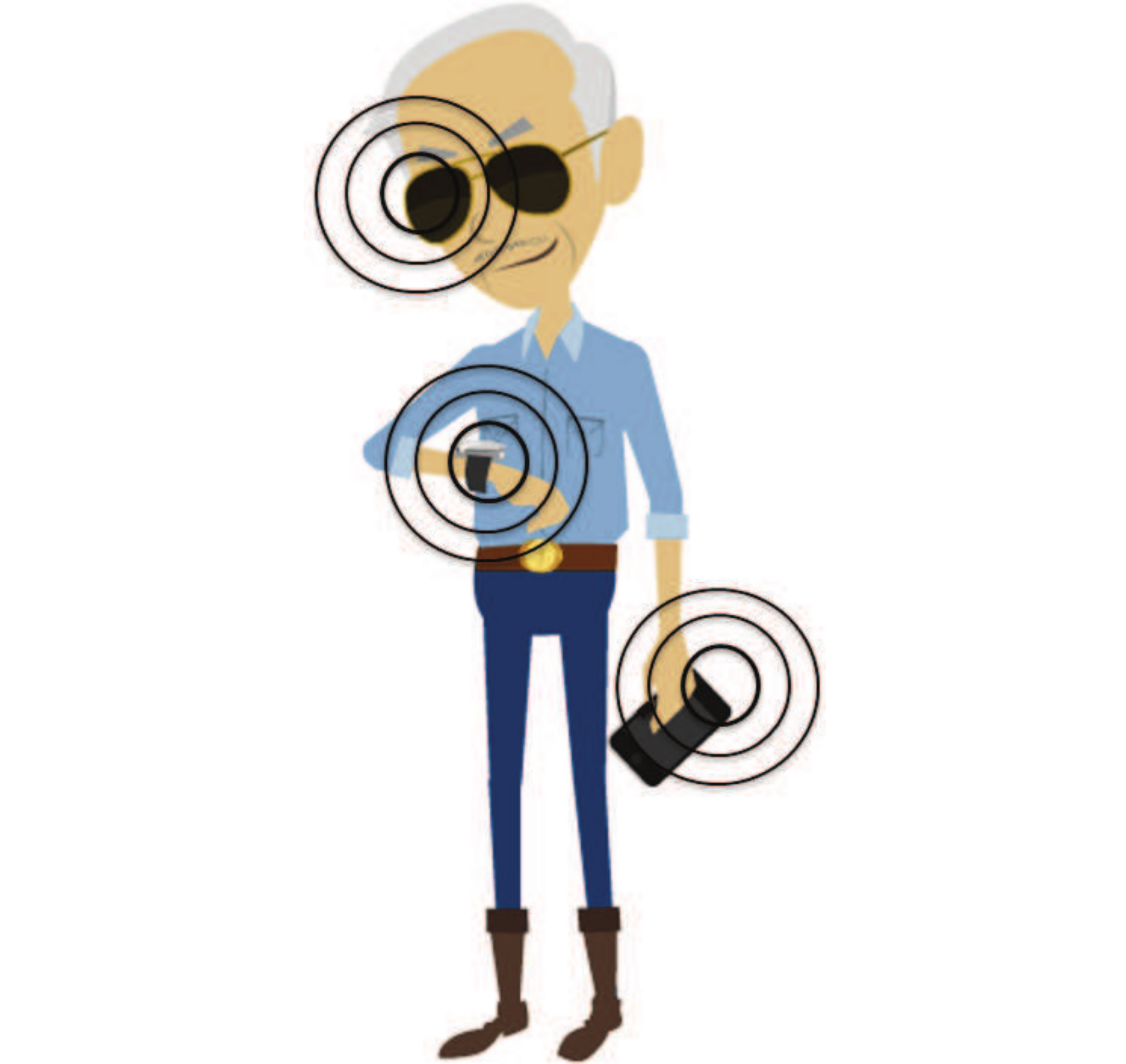}
\label{fig:wearble_network}
}
\hspace{.5in}
\subfigure[A finite network located, for instance, in a train car. Small circles represent wireless devices and large circles represent blockages. Also shown the concept of a blocking cone ${C_i}$, and a blocked interferer $X_j \in {C_i}$.]{
\includegraphics[scale=0.6]{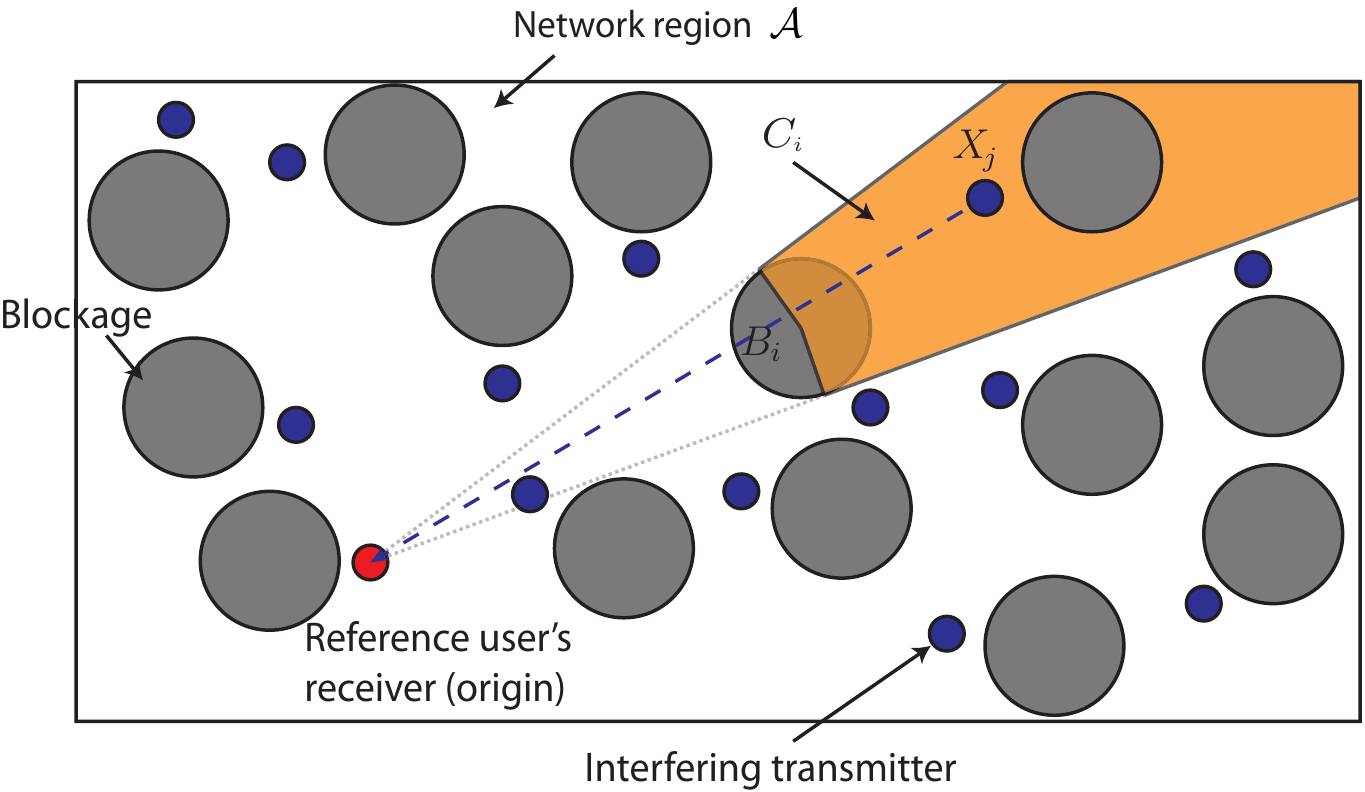}
\label{fig:train_car}
}
\caption{Many users with wearable networks like those shown in (a) will be located in close proximity as in (b), creating mutual interference. People block some of the interfering signals.}
\label{fig:system_model}
\vspace{-.2in}
\end{figure}

The millimeter wave (mmWave) band contains a wide range of carrier frequencies capable of supporting short range high-rate wireless connectivity \cite{CPark2007}. The mmWave band has several desirable features which include large bandwidth, compatibility with directional transmissions, reasonable isolation, and dense deployability. Standards like  Wireless HD \cite{WirelessHD} and IEEE 802.11ad \cite{IEEE:11ad} have already made mmWave-based commercial products a reality. Wearable networks might use these standards or might use device-to-device operating modes proposed for mmWave-based next-generation (5G) cellular systems \cite{mmWCM:2014, Rapp:2013}. Short-range mmWave communication systems usually focus on high-speed wireless connectivity to replace cable connections. However, these emerging protocols have yet to prove their effectiveness in a highly dense interference scenario.

The tool of stochastic geometry has been extensively used to study interference in large wireless networks \cite{cardieri:2010, andrews:2010, andrews:2009, Haenggi:Now}. Prior work on mmWave-based networks has also used the results from stochastic geometry to analyze coverage and rate  \cite{bai:2014, SSingh_arX, Bai:CM2014} while modeling the directionality of antennas and the effect of blockages. For analytical tractability, most work assumes an infinite number of mobile devices spread over an infinite area. These assumptions allow the analytical expressions related to the spatial average of the system performance to be simplified through application of Campbell's theorem \cite{baccelli:2010}. Analysis of the outage probability conditioned on the network geometry in ad hoc networks with a finite spatial extent and number of interferers was performed in \cite{torrieri:2012}, which was extended to the analysis of frequency-hopping networks in \cite{valenti:2013a}. The unique channel characteristics and antenna features \cite{mmWaveBook} for mmWave networks, however, were not considered in \cite{torrieri:2012, valenti:2013a}. The mmWave channel has been studied for the outdoor environment \cite{Rappaport:measurement} and the significant effect of blockages on signal propagation is well known \cite{Bai2014}. In crowded environments such as train cars or airline cabins, human bodies are a main and significant source of blockage of mmWave frequencies \cite{Lu:ZTE, Bai:Asilomar14}. This implies that the very same users that wear the interfering transmitters act to block interference from other wearable networks.

In this paper, we characterize the performance of mmWave wearable communication networks. We focus on networks operating at mmWave carrier frequencies that are confined to a limited region and contain a finite number of interferers while not explicitly modeling the impact of reflections within the finite region or at its boundaries. We develop an approach for calculating coverage and rate in such a network.  As mmWave systems are likely to use compact antenna arrays, we assess the impact of antenna parameters, in particular the beamwidth and antenna gain, on the coverage and spectral efficiency of the system. Compared with \cite{bai:2014, SSingh_arX, Bai:CM2014, Haenggi:Now, torrieri:2012, valenti:2013a}, we use the same computational approach as in \cite{torrieri:2012, valenti:2013a}, with assumptions on mmWave propagation, antennas, and blockage similar to those in \cite{bai:2014, Bai:CM2014}, though we model people -- not buildings -- as blockages. Compared with our prior work in \cite{mmWave:2015} where the interferers were assumed to be at fixed locations, this paper also considers interferers that are randomly located. We begin by presenting an analysis that leads to closed-form expressions for the coverage probability conditioned on the location of the interferers and blockages.  Then, through a sequence of assumptions, we find the spatially averaged coverage and rate when the interference and blockages are drawn from a random point process.  The assumptions and analysis are verified through a set of simulations, which involve the repeated random placement of the users according to the modeled point process. 

The organization of the paper is as follows: We introduce the network topology and signal model in Section \ref{sec:Model}. We describe the interference model and derive expressions for the signal to interference plus noise ratio (SINR) distribution and rate coverage probability in Section \ref{sec:Interfere_model}. In Section \ref{sec:Num_results}, we provide numerical results when the users are located at fixed locations. We assert the simplifying assumptions for analyzing wearable networks when the users are located at random locations in Section \ref{sec:SpatialAvg} and, in Section \ref{sec:sim_results}, verify through simulations that the assumptions have a negligible effect on the accuracy of the analysis. Finally, we conclude our work and give suggestions for future work in Section \ref{sec:Concl}. 

\section{Network Topology and Signal Model}
\label{sec:Model}
Consider a finite network region $\mathcal A$ with a reference receiver and $K$ potentially interfering transmitters. The reference transmitter is assumed to be located at an arbitrary but fixed distance $R_0$ from the reference receiver at an azimuth angle $\phi_0$ and elevation $\psi_0$. The area of the network in the horizontal plane is denoted by $|\mathcal A|$, so that the interferer density $\lambda = K/|\mathcal A|$. 
The interfering transmitters and their locations are denoted by $X_i,~i=1,~2,~...,~K$. We assume the reference receiver to be located at the origin and represent $X_i$ as a complex number $X_i = R_i e^{j \phi_i}$, where $R_i = |X_i|$ is the distance between the $i^{th}$ transmitter and the receiver, and $\phi_i = \angle X_i$ is the azimuth angle to $X_i$ from the reference receiver. For simplicity, we assume that all the interferers are on the same horizontal plane that contains the reference receiver, though our model could also be easily generalized to handle the 3-D locations of the transmitters. 
Further, this assumption results in the 2-D blockage model that is elaborated next.

To model human body blockages, we associate each user's body with a circle of diameter $W$, as illustrated in Fig. \ref{fig:train_car}. These circles as well as the location of their centers are denoted by $B_i$. Like $X_i$, $B_i$ is represented as a complex number so that $B_i = |B_i| e^{j\angle B_i}$, where $|B_i|$ is the distance between the center of the $i^{th}$ human body blockage and the receiver, and $\angle B_i$ is the azimuth angle to $B_i$ from the reference receiver. In this blockage model, a transmitter $X_i$ is blocked if the direct path from $X_i$ to the reference receiver goes through the circle associated with any $B_j$ or if $X_i$ falls within the diameter-$W$ circle associated with any blockage $B_j$. The $i^{\mathrm{th}}$ user is associated with both a transmitter $X_i$ and a blockage $B_i$, and it is possible that transmitter $X_i$ is blocked by its corresponding blockage $B_i$. This is called \textit{self-blocking}, a phenomenon that was studied in \cite{Bai:Asilomar14}, in the context of 5G mmWave cellular system. 
If there are no blockages in the path from $X_i$ to the reference receiver, then we say that the path is \textit{line of sight} (LOS); otherwise, we say that it is \textit{non-LOS} (NLOS). We associate different channel parameters with LOS and NLOS paths, accounting for different path-loss and fading models inspired by measurements \cite{indoormeasurements, 15.3measurements}. In this paper, we assume that an interferer $i$ is potentially blocked from the reference receiver by $B_j, j\neq i$. Under this assumption that no signal is self-blocked, the following algorithm is used to determine which signals are blocked.
\begin{enumerate}
   \item Determine $\mathcal{L}$, the set of all transmitters $X_i$ that have no blockages $B_j, j \neq i$ within a distance of $W/2$; i.e.,
   \begin{eqnarray}
   \mathcal{L} = \left\lbrace X_i : |X_i - B_j| > \frac{W}{2}~ \forall j \neq i \right\rbrace, \label{Equation:Lset}
   \end{eqnarray} where $|X_i - B_j|$ is the distance along the horizontal plane between $X_i$ and $B_j$.
   \item Sort the blockages from closest to most distant, so that $|B_1| \leq |B_2| \leq ... \leq |B_K|$.   
   \item For each $i \in \{ 1,~2,~...,~ K \}$, compute the \textit{blocking cones (wedge in 2D)}\footnote{Strictly speaking, a blocking cone is an instance of a truncated cone because it does not extend to the origin.} $B_{C_i}$ as 
\begin{eqnarray}
{C_i} = \left\lbrace x \in \mathcal{A}: |x|>|B_i|,~ \angle B_i - \arcsin \left(\frac{W}{2|B_i|}\right)\leq \angle x \leq \angle B_i + \arcsin \left(\frac{W}{2|B_i|}\right)\right\rbrace. \label{Equation:BlockCone}
\end{eqnarray}
	\item For each $ \ell \in \mathcal{L}$, determine if $X_{\ell}$ is blocked by checking to see if it lies within any blocking cone; i.e., if 
	\begin{eqnarray}
	\phi_{\ell}\in \underset{\left\lbrace i: |B_i|< R_{\ell} \right\rbrace}{\bigcup} {C_i},
	\end{eqnarray} then $X_{\ell}$ is blocked. 
\end{enumerate} An illustration of the blocking cone discussed here is shown in Fig. \ref{fig:system_model}.


While the antenna gain pattern $G(\phi, \psi)$ is a complicated function of the azimuth angle $\phi \in \left[-\pi, \pi\right]$ and the elevation angle $\psi \in \left[-\frac{\pi}{2},\frac{\pi}{2}\right]$, to facilitate analysis, we use the three-dimensional sectorized antenna model as shown in Fig. \ref{fig:antenna_model}. We characterize the antenna array pattern with four parameters - the half-power beamwidth $\theta^{(\mathsf{a})}$ in the azimuth, the half-power beamwidth $\theta^{(\mathsf{e})}$ in the elevation, antenna gain $G$ within the half-power beamwidths (main-lobe) and gain $g$ outside it (side-lobe). We use the subscript $\mathsf{t}$ to denote an antenna parameter for a transmitter and subscript $\mathsf{r}$ for the receiver. For example, the main-lobe gain of the transmitter is $\Gt$ and that of the receiver is $\Gr$. Similarly, the side-lobe gains are denoted by $\gt$ and $\gr$. Because the system model and the analysis presented in this paper are general, substituting the appropriate values for the four parameters $G$, $g$, $\theta^{(\mathsf{e})}$ and $\theta^{(\mathsf{e})}$ into the expressions corresponding to the transmitters and the reference receiver enables the rapid evaluation of the SINR distribution. To compare performance in terms of directivity and gain based on practical antennas, we assume that a uniform planar square array (UPA) with half-wavelength antenna element spacing is used at the transmitters and the receiver. The number of antenna elements at the transmitter and receiver are denoted by $\Nt$ and $\Nr$, respectively. The antenna gain $G(\phi, \psi)$ of a UPA is modeled as a sectorized pattern as follows. The half-power beamwidths in the azimuth and the elevation are inversely proportional to $\sqrt{N}$ \cite{balanis:2012}. The main-lobe gain is taken to be $N$, which is the maximum power gain that can be obtained using $N$-element antenna array. Note that this is an approximation, though it is possible to design antennas to give near-flat response within the beamwidth with $G \propto N$.  The side-lobe gain is then evaluated so that the following antenna equation for constant total radiated power is satisfied \cite{balanis:2012}
\be
\underset{\hspace{-0.2in}-\pi}{\int^{\pi}} \underset{\hspace{-0.2in}-\frac{\pi}{2}}{\int^{\frac{\pi}{2}}} G(\phi, \psi) \cos(\psi) \mathsf{d}\psi \mathsf{d} \phi &=& 4\pi. \label{Equation:basic_antenna}
\ee
By using \eqref{Equation:basic_antenna}, we ensure the antenna arrays are passive components. The values for the half-power beamwidths (which are equal in both the azimuth and elevation for UPA), main-lobe and side-lobe gains for an $N$ element (i.e. $\sqrt{N}\times \sqrt{N}$) UPA are given in Table \ref{table:UPA}. When the number of antenna elements is one, we say that the UPA is omni-directional and, hence, the main-lobe and side-lobe gains are unity. This serves as a reference to compare the impact of antenna gain and directivity. As in \cite{bai:2014, SSingh_arX}, we assume that each interferer is transmitting with its main-lobe pointed in a random direction.

\begin{figure}
\centering
\begin{minipage}[t]{.55\textwidth}
\centering
\vspace{0pt}
\captionof{table}{Antenna parameters of a uniform planar square antenna}
\begin{tabular}{|c|c|} 
\hline 
Number of antenna elements & $N$  \\ 
\hline 
\hline
Half-power beamwidth, $\theta^{(\mathsf{a})} = \theta^{(\mathsf{e})}$ & $\frac{\sqrt{3}}{\sqrt{N}}$ \\ 
\hline
Main-lobe gain $G$ & $N$ \\ 
\hline 
Side-lobe gain $g$& $\frac{\sqrt{N}-\frac{\sqrt{3}}{2\pi} N \sin\left( \frac{\sqrt{3}}{2\sqrt{N}}\right)}{\sqrt{N}-\frac{\sqrt{3}}{2\pi}\sin\left(\frac{\sqrt{3}}{2\sqrt{N}}\right)}$ \\ 
\hline 
\end{tabular} \label{table:UPA}
\end{minipage} \hfill
\begin{minipage}[t]{.35\textwidth}
\centering
\vspace{0pt}
\includegraphics[height = 1.6in, width = 1.8in]{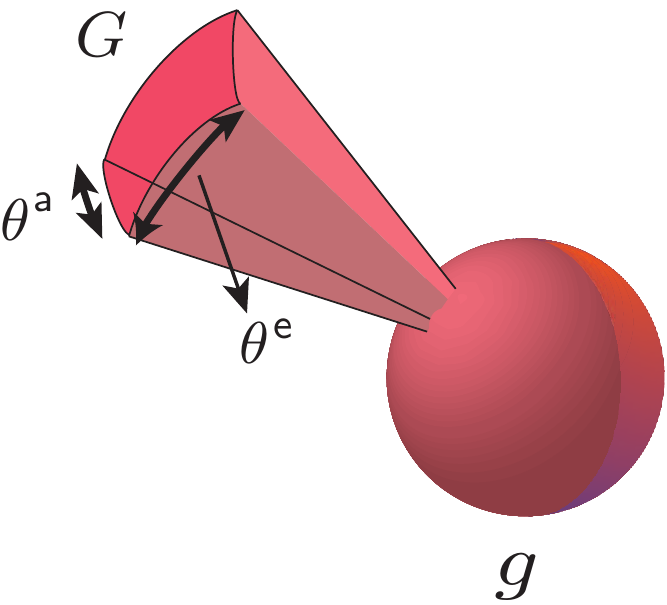}
\caption{Sectorized 3D antenna pattern.} \label{fig:antenna_model}%
\end{minipage}
\vspace{-.2in}
\end{figure}

We assume Nakagami fading for the wireless channels so that the power gain $h_i$ due to fading from $X_i$ to the reference receiver is Gamma distributed. We use $m_i$ to denote the Nakagami factor for the link from $X_i$ to the reference receiver, which assumes a value of $\mL$ for LOS and $\mN$ for NLOS \cite{bai:2014}. The path-loss exponent for $X_i$ is denoted as $\alpha_i$, where  $\alpha_i = \alphaL$ if $X_i$ is LOS and $=\alphaN$ if it is NLOS. There are different ways to define the signal-to-noise ratio (SNR) in a system with antenna arrays: with and without the antenna gains. We use $\sigma^{2}$ to denote the noise power divided by the reference transmitter power as measured at a reference distance excluding the antenna gains. While $\sigma^2$ is inversely proportional the SNR, we intentionally do not include the antenna gains into its computation, so that our results will naturally capture the SNR enhancement that accompanies the use of larger antenna arrays. The transmit power of $X_i$ is denoted as $P_i$.  Each interferer transmits with probability $\pt$, which is determined by the random-access protocol and user activity and is assumed to be the same for all interferers.

We assume that the reference communication link is always LOS. The reference link undergoes Nakagami fading with parameter $m_0 = \mL$ and has path-loss exponent $\alpha_0 = \alphaL$. Of course, it is possible that the reference user's body itself will create blockages on the reference link in a wearable network. When this occurs, it can be handled in our model by setting $m_0 = \mN$ and $\alpha_0 = \alphaN$. Capturing self-blockage of the reference link in a more refined model and incorporating the results into the analysis is an interesting topic for future work.

It is important to note that the boundaries of the finite area are assumed to be impenetrable, so there is no leakage of external interference into the finite area. Further, reflections due to the boundary and objects within the network are not explicitly incorporated in the model. They are accounted for only in a coarse way in the different LOS and NLOS model parameters, which ideally would be determined based on ray tracing or measurement results. The assumption of omitting reflections holds true in many scenarios where the boundaries of the finite area are made of poorly reflecting materials such as concrete or bricks.

\section{Interference Model}
\label{sec:Interfere_model}
Conditioned on the network (meaning the locations of the transmitters and blockages), we can find the complementary cumulative distribution function (CCDF) of the SINR (also called SINR coverage probability \cite{bai:2014}) by adapting the analysis in \cite{torrieri:2012, valenti:2013a}. The analysis that follows in Section \ref{ssec:coverage_prob} is very general since it can admit the individual interferers to have separate and independent values for the channel parameters - $\alpha_i$ and $m_i$, and does not require the LOS channel to have values $(\alphaL, \mL)$ and the NLOS channel to have values $(\alphaN, \mN)$. The assumption of fixing the channel parameters of the LOS and NLOS interferers yields tractable analytical expression for spatially averaged SINR coverage probability in Section \ref{sec:SpatialAvg}.

We define a discrete random variable $I_i$ for $i=\{1, ..., K\}$ that represents the relative power radiated by $X_i$ in the direction of the reference receiver.  With probability $(1-\pt)$, $X_i$ does not transmit at all, and hence $I_i = 0$.  Otherwise, the relative power will depend on whether or not the random orientation of $X_i$'s antenna is such that the reference receiver is within the main-lobe. We assume a uniform orientation of $X_i$'s antenna, so that the azimuth angle $\phi$ is uniform in $\left[0, 2\pi\right)$ and the elevation angle $\psi$ has a probability density function (pdf) $\frac{1}{2}\cos(\psi)$ in $\left[-\frac{\pi}{2},\frac{\pi}{2}\right]$. The pdfs can be derived by noting that the surface area element of a unit sphere is $\cos\left(\psi\right)\mathrm{d}\phi \mathrm{d}\psi$, a function of the elevation angle $\psi$. Thus the probability that the reference receiver is within the interferer's main-lobe is $\frac{\thetat^{(\mathsf{a})}}{2\pi} \sin\left(\frac{\thetat^{(\mathsf{e})}}{2}\right) = \pM$.  It follows that
\begin{eqnarray}
   I_i
   & = &
   \begin{cases}
        0 & \mbox{with probability $(1-\pt)$} \\
        \Gt & \mbox{with probability $\pt \pM$} \\
        \gt & \mbox{with probability $\pt \left( 1 - \pM \right)$}
   \end{cases}. \label{Equation:Ii}
\end{eqnarray}
Note that a similar approach was used in \cite{valenti:2013a} for modeling adjacent-channel interference in frequency hopping: when the interferer transmitted, one of two power compensations was applied depending on whether the interferer hopped into the same or an adjacent channel. In the wearable network context, we can justify randomizing the orientation angles of the interferers because: (1) the user itself may be randomly moving the orientation of its devices while using them, and (2) the user may have a wearable network with several devices with different orientations and random activity, though we assume the medium access protocol (MAC) of a user's wearable network allows only one of her devices to transmit at a time. It may be noted here that this kind of wearable network is still under development, so the exact MAC protocol has not yet been decided. We make a reasonable assumption that the network of a given user is coordinated such that only one device transmits at a time, while the devices of different users are not so coordinated and can therefore collide.  

Now, let us define the normalized power gain from $X_i$ to be 
\begin{eqnarray}
\label{Equation:Omega}   \Omega_i
   & = &
   \begin{cases}
      \frac{P_i}{P_0} \Gr R_i^{-\alpha_i} & \mbox{ if $-\frac{\thetar^{(\mathsf{a})}}{2} \leq \phi_i - \phi_0 \leq \frac{\thetar^{(\mathsf{a})}}{2}$} \\
      \frac{P_i}{P_0} \gr R_i^{-\alpha_i}& \mbox{ otherwise }
   \end{cases},     
\end{eqnarray}
where $\alpha_i = \alphaN$ if $X_i$ is NLOS and $\alpha_i = \alphaL$ if $X_i$ is LOS. This is the worst-case situation when $|\psi_0| \leq \frac{\thetar^{(\mathsf{e})}}{2}$. If $|\psi_0| > \frac{\thetar^{(\mathsf{e})}}{2}$, we have $\Omega_i = \frac{P_i}{P_0} \gr R_i^{-\alpha_i}, ~ \forall ~i$ which is a simpler trivial case. For the rest of the paper, we assume the non-trivial worst case and all the analysis presented hereafter extends easily for the trivial case. The SINR is
\begin{eqnarray}
    \gamma & = &
    \frac{ \Gt h_{0} \Omega_{0}}
    {\displaystyle \sigma^2 + \sum_{i=1}^{K} I_{i} h_{i} \Omega_{i} },  \label{Equation:SINR}
\end{eqnarray} where $\Omega_{0} = \Gr R_0^{-\alpha_0}$ is the normalized power gain from the reference transmitter, as we assume the reference transmitter is always within the main beam of the reference receiver. The effect of misalignment of beam in the reference link was considered at lower frequencies (e.g. UHF) in prior work \cite{Hanyu_beam, Wildman_beam}. In the wearable communication network context, however, since the distance of the reference link is short relative to the beamwidth of the antenna, pointing errors will not seriously degrade performance. (For instance, with our sectorized antenna model, the beam could be off by half the beamwidth without changing performance.)

\subsection{Coverage Probability}
\label{ssec:coverage_prob}
Denoting $\bf{\Omega}$ $= [\Omega_0, ..., \Omega_K]$, the coverage probability ${P}_{\mathsf{c}}(\beta, \bf{\Omega})$ for a given $\bf{\Omega}$ is defined as the CCDF of the SINR evaluated at a threshold $\beta$ and is given by
\begin{eqnarray}
P_{\mathsf{c}}(\beta, \bf{\Omega}) = \mathbb{P}\left[\gamma > \beta | \bf{\Omega}\right]. \label{Equation:Coverage}
\end{eqnarray}
Substituting \eqref{Equation:SINR} into \eqref{Equation:Coverage} and rearranging leads to
\be
P_{\mathsf{c}}(\beta, \bf{\Omega}) = \mathbb{P}\left[ \left. \mathsf{S} > \sigma^2 + \sum_{i=1}^{K} \mathsf{Y}_i \; \right| \; \bf{\Omega}\right], \label{Equation:SINR_exp} 
\ee
where $\mathsf{S} = \beta^{-1} \Gt h_0 \Omega_0$, and $\mathsf{Y}_i = I_i h_i \Omega_i.$
Conditioned on $\bf{\Omega}$, let $f_{\mathsf{\bf{Y}}}(\bf{y})$ denote the joint pdf of $(\mathsf{Y}_1, ..., \mathsf{Y}_K)$ and $f_{\mathsf{S}}(s)$ denote the pdf of $\mathsf{S}$. Then, \eqref{Equation:SINR_exp} can be written as
\begin{eqnarray}
P_{\mathsf{c}}(\beta, \bf{\Omega}) & = & \underset{\mathbb{R}^K}{\int ... \int} \left(\int_{\sigma^2+\sum_{i=1}^K y_i}^{\infty} f_{\mathsf{S}}(s)\mathsf{d}s \right) f_{\mathsf{\bf{Y}}}({\bf{y}})\mathsf{d}{\bf{y}}. \label{Equation:SINR_integral}
\end{eqnarray}
Defining $\beta_0 = {\beta m_0}/{\Gt \Omega_0}$ and assuming that $m_0$ is a positive integer,  the random variable $\mathsf{S}$ is  gamma distributed with pdf given by
\begin{eqnarray}
f_{\mathsf{S}}(s) &=& \frac{\left( \beta_0\right)^{ m_0}}{( m_0 - 1)!}s^{m_0 - 1} e^{- \beta_0 s} ,~ s \geq 0. \label{Equation:pdf_S}
\end{eqnarray}
Using \eqref{Equation:pdf_S}, the inner integral in \eqref{Equation:SINR_integral} is
\begin{eqnarray}
\label{Equation:inner_integral}
\overset{\infty}{\underset{\sigma^2+\sum_i^K y_i}{\int}} f_{\mathsf{S}}(s)\mathsf{d}s ~=~ e^{-\beta_0(\sigma^2 + \sum_{i=1}^K y_i )} \sum_{\ell = 0}^{m_0 - 1}\frac{(\beta_0\sigma^2)^\ell}{\ell!}\left(1 + \frac{1}{\sigma^2}\sum_{i=1}^K y_i\right)^\ell.
\end{eqnarray} Substituting \eqref{Equation:inner_integral} into \eqref{Equation:SINR_integral} leads to
\begin{eqnarray}
P_{\mathsf{c}}(\beta, {\bf{\Omega}}) = e^{-\beta_0\sigma^2}\sum_{\ell = 0}^{m_0 - 1}\frac{(\beta_0\sigma^2)^\ell}{\ell!} \underset{{\mathbb{R}}^K}{\int ... \int}e^{-\beta_0 \sum_{i=1}^K y_i}\left(1 + \frac{1}{\sigma^2}\sum_{i=1}^K y_i\right)^\ell f_{\mathsf{\bf{Y}}}({\bf{y}})\mathsf{d}{\bf{y}}. \label{Equation:shorthand_P}
\end{eqnarray}
Using the binomial theorem followed by multinomial expansion, 
\begin{eqnarray}
\left(1 + \frac{1}{\sigma^2}\sum_{i=1}^K y_i\right)^\ell =&  {\displaystyle \sum_{t = 0}^{\ell} \binom {\ell}{t}} \left(\frac{1}{\sigma^2} {\displaystyle \sum_{i=1}^K} y_i\right)^t =&  \sum_{t = 0}^{\ell} \binom {\ell}{t} \frac{t!}{\sigma^{2t}} \sum_{{\mathcal S}_t} \left( \prod_{i=1}^{K}\frac{y_i^{t_i}}{t_i!}\right), \label{Equation: multinomial}
\end{eqnarray} where the last summation is over the set ${\mathcal S}_t$ containing all length-$K$ non-negative integer sequences $\left\lbrace t_1, \ldots, t_K \right\rbrace$ that sum to $t$. This can be pre-computed and saved as a matrix as explained in \cite{torrieri:2012}. Substituting \eqref{Equation: multinomial} into \eqref{Equation:shorthand_P} gives 
\begin{eqnarray}
P_{\mathsf{c}}(\beta, \bf{\Omega}) &=& e^{-\beta_0\sigma^2}\sum_{\ell = 0}^{m_0 - 1}\frac{(\beta_0\sigma^2)^\ell}{\ell!}\sum_{t = 0}^{\ell}\binom {\ell}{t} \frac{t!}{\sigma^{2t}} \sum_{ {\mathcal S}_t }\underset{{\mathbb{R}}^K}{\int ... \int} \prod_{i=1}^{K}\frac{y_i^{t_i}}{t_i!}e^{-{\beta_0}y_i} f_{\mathsf{\bf{Y}}}({\bf{y}})\mathsf{d}{\bf{y}}. \label{Equation:exp_short_P}
\end{eqnarray}
Given $\bf{\Omega}$, the $\left\lbrace\mathsf{Y}_i\right\rbrace _{i = 1} ^{K}$ are independent. So, $f_{\mathsf{\bf{Y}}}({\bf{y}})$ may be written as $\prod_{i=1}^{K} f_{{\mathsf{Y}}_i}(y_i)$, where
\begin{eqnarray}
f_{{\mathsf{Y}}_i}(y_i) &=& \pt\left(\frac{m_i}{\Omega_i}\right)^{m_i}\frac{y_{i}^{m_i-1}}{\Gamma(m_i)}\left[\pM\frac{e^{-\frac{m_i y_i}{\Gt \Omega_i}}}{\Gt^{m_i}} + \left(1-\pM\right)\frac{e^{-\frac{m_i y_i}{\gt \Omega_i}}}{\gt^{m_i}}\right] u(y_i) + \left(1-\pt \right)\delta(y_i), \label{Equation: pdf_Yi}
\end{eqnarray} $\delta(y_i)$ is the Dirac delta function, and $u(y_i)$ is the unit step function. From the independence of the $\{\mathsf{Y}_i\}$,  \eqref{Equation:exp_short_P} may be written as 
\begin{eqnarray}
P_{\mathsf{c}}(\beta, \bf{\Omega}) &=& e^{-\beta_0\sigma^2}\sum_{\ell = 0}^{m_0 - 1}\frac{(\beta_0\sigma^2)^\ell}{\ell!}\sum_{t = 0}^{\ell}\binom {\ell}{t} \frac{t!}{\sigma^{2t}} \sum_{ {\mathcal S}_t }\left(\prod_{i=1}^{K} {\mathcal{G}}_{t_i}(\Omega_i)\right), \label{Equation:prod_form}
\end{eqnarray} where
\begin{eqnarray}
{\mathcal{G}}_{t_i}(\Omega_i)  &=&  \int_0^{\infty} \frac{y_i^{t_i}}{t_i!}e^{-{\beta_0}y_i} f_{{\mathsf{Y}}_i}(y_i)dy_i. \label{Equation:G_ti1}
\ee To evaluate \eqref{Equation:G_ti1}, we use the fact that $\frac{z^{k-1}e^{-\frac{z}{b}}}{b^k \Gamma(k)}$ is a probability density function (of a gamma-distributed random variable $Z$) with parameters $k,b > 0$, so that
\be
\int_0^{\infty} \frac{z^{k-1}e^{-\frac{z}{b}}}{b^k \Gamma(k)}dz = 1. \label{Equation:gamma_pdf}
\ee Accordingly, \eqref{Equation:G_ti1} simplifies to
\be
{\mathcal{G}}_{t_i}(\Omega_i) &=& \hspace{-0.1in} \pt \left(\frac{\Omega_i}{m_i}\right)^{t_i}\frac{\Gamma(m_i+ t_i)}{{t_i}!\Gamma(m_i)}\left[ \pM{\mathcal{Q}}_{t_i}(\Gt) + \left(1-\pM\right){\mathcal{Q}}_{t_i}(\gt) \right] + \left(1-\pt \right)\delta [t_i]. \label{Equation:G_ti}
\end{eqnarray} In \eqref{Equation:G_ti}, $\delta [t_i]$ is the function defined as
\begin{eqnarray}
\delta [t_i] &=& \begin{cases}
        1 & \mbox{if $t_i = 0$} \\
        0 & \mbox{if $t_i \neq 0$}
   \end{cases}\\
\text{and} ~~~
{\mathcal{Q}}_{t_i}(x) &=& x^{t_i}\left( 1 + \frac{\beta_0x \Omega_i}{m_i}\right)^{-(m_i + t_i)} \label{Equation:Qti}.
\end{eqnarray} The assumption of an integer value for $m_0$ is key to the derivation of the exact expression for the SINR coverage probability in \eqref{Equation:prod_form}. When $m_0$ is not an integer, such an exact evaluation is not possible to the best of our knowledge. Only an upper-bound using the results from \cite{Alzer:1997} can be obtained for a general real-valued $m_0$.

\subsection{Ergodic Spectral Efficiency}
\label{ssec:ergodic}
When the SINR is $\beta$, the spectral efficiency in bits per channel use is
\be
  \eta
  & = &
  \log_2 (1 + \beta).
\ee
The CCDF of the spectral efficiency is found by defining the equivalent events
\be
\{ \gamma > \beta ~|~ \bf{\Omega}\} 
& \Leftrightarrow & 
\underbrace{\{ \log_2(1+\gamma) > \eta  ~|~ \bf{\Omega} \}}_{ \{ \gamma > 2^{\eta} - 1  ~|~ \bf{\Omega}\} }.
\ee
The event on the left corresponds to the coverage probability $P_\mathsf{c}(\beta, \bf{\Omega})$, while the event on the right corresponds to the CCDF of the spectral efficiency, $P_\mathsf{\eta}(\eta, \bf{\Omega})$, also called the \emph{rate coverage probability} for a given $\bf{\Omega}$.  Since equivalent, the two events have the same probability, and it follows that
\be
  P_\mathsf{\eta}(\eta, \bf{\Omega})
  & = &
  P_\mathsf{c} \left( 2^{\eta} - 1, \bf{\Omega} \right).
\ee
See also Lemma 5 of \cite{bai:2014}.

Using the fact that, for a non-negative $X$, $\mathbb{E}[X] = \int_0^\infty (1-F(x)) \mathsf{d} x$ (see (5-33) in \cite{papoulis:2002}), the {\em ergodic spectral efficiency} conditioned on $\bf{\Omega}$ can be found from
\be
  \mathbb{E}[\eta]
   = &
  \int_0^\infty P_\mathsf{c} \left( 2^{\eta} - 1 , \bf{\Omega} \right) \mathsf{d} \eta  = &
  \frac{1}{\log(2)}\int_0^\infty \frac{ P_\mathsf{c} \left( \beta, \bf{\Omega} \right)}{1+\beta} \mathsf{d} \beta,
\ee
where the last step uses the change of variables $\beta = 2^{\eta} - 1 \rightarrow \mathsf{d}\eta = \frac{1}{\log(2)}\mathsf{d}\beta/(1+\beta)$.

In practice, there is a maximum and minimum rate, and hence, a maximum and minimum SINR thresholds $\beta_\mathsf{max}$ and $\beta_\mathsf{min}$, respectively.  This maximum may be imposed by the modulation order of the constellation used and distortion limits in the RF front end while minimum due to the receiver sensitivity.  In this case, the limits of the integral are $\beta_\mathsf{min}$ and $\beta_\mathsf{max}$, and
\be \label{Equation:ergSpecEffeciency}
  \mathbb{E}[\eta]
  & = &
  \int_{\beta_\mathsf{min}}^{\beta_\mathsf{max}} \frac{ P_\mathsf{c} \left( \beta, \bf{\Omega} \right)}{\log(2)(1+\beta)} \mathsf{d} \beta.
\ee The quickest way to compute \eqref{Equation:ergSpecEffeciency} is to simply compute $P_\mathsf{c} \left( \beta, \bf{\Omega} \right)$ for a finely spaced $\beta$ and then use the trapezoidal rule to numerically solve the integral.

\section{Numerical Results for fixed geometry}
\label{sec:Num_results}
In this section, we provide numerical results for coverage probability and ergodic spectral efficiency. The users are located at fixed locations, but to enable a comparison against random topologies (see Section \ref{sec:sim_results}), their placement is confined to an annulus $\mathcal{A}$ having inner radius $\rin$ and outer radius $\rout$. Conditioned on the fixed locations of the interferers and the blockages, the exact expression for the SINR coverage probability can be deriverd using \eqref{Equation:prod_form}. We assume there are $K$ interfering transmitters, neglect self-blocking, and assume that the blockage and transmitter associated with each user are co-located; i.e., $B_i = X_i$ for each $i$. It is assumed that the $P_i$ are all the same; i.e., all transmitters transmit at the same power.

\begin{table}
\centering
\caption{Antenna Parameters} \label{table:antenna_param}
\vspace{-0.1in}
\begin{tabular}{|c|c|c|c|}
\hline 
Number of antenna elements & 1 & 4 & 16 \\ 
\hline 
\hline 
Half-power beam width in the elevation and azimuth (in degrees) & 360 & 49.6 & 24.8 \\ 
\hline 
Main-lobe gain (in dB)& 0 & 6 & 12 \\ 
\hline 
Side-lobe gain (in dB)& 0 & -0.8839 & -1.1092 \\ 
\hline 
\end{tabular}
\vspace{-0.2in} 
\end{table}

\begin{table}
\centering
\caption{Parameters used to obtain numerical results for fixed geometry } \label{table:num_param}
\vspace{-0.1in}
\begin{tabular}{|c|c|c|}
\hline 
Parameter & Value & Description\\ 
\hline
\hline
$R_0$ & 0.3 m & Reference link length\\
\hline
$\phi_0, \psi_0$ & $0^o$ & Antenna main-lobe orientation of the reference receiver\\
\hline
$\mL$ & 4 & Nakagami parameter for LOS link\\ 
\hline 
$\mN$ & 2 & Nakagami parameter for NLOS link\\ 
\hline 
$\alphaL$ & 2 & Path-loss exponent for LOS link\\ 
\hline 
$\alphaN$ & 4 & Path-loss exponent for NLOS link\\ 
\hline 
$W$ & 0.3 m & Width of the human-body blockages \\ 
\hline 
$\sigma^2$ & -20 dB & Noise power normalized by reference transmitter power\\ 
\hline 
$K$ & 36 & Number of potential interferers\\ 
\hline 
\end{tabular}
\vspace{-0.2in}
\end{table}

The values of the antenna half-power beamwidths, main-lobe and side-lobe gains are summarized in Table \ref{table:antenna_param}. Note that it is possible to get desired side-lobe isolation by carefully designing the array response via windowing similar to filter design \cite{balanis:2012}. This would also add complexity to the array design and configuration. Since power and heating issues are critical for wearable devices, it is yet to be determined if such techniques would indeed be considered in future gadgets. To quantify the effect of antenna directivity and as an example, we chose a uniform planar array described in Table \ref{table:UPA}. The network and signal parameters used to obtain the results in this section are summarized in Table \ref{table:num_param}. The Nakagami parameters and the path-loss exponents assumed are the ones used in \cite{bai:2014}. For simplicity and to ensure the interferers are uniformly spread out in the network region, we let the user locations to be on a $n \times n$ square lattice restricted to the annulus $\mathcal{A}$. The network region under this assumption is shown in Fig. \ref{fig:fixed_grid} where the $7 \times 7$ grid locations and the user locations in the network region $\mathcal{A}$ are shown.  We let the minimum distance of two nodes in the grid be $2R_0$. For example, when the lattice points are separated by $0.6$ m ($R_0 = 0.3$ m as in Table \ref{table:num_param}) and $n$ = 7, we get $K$ = 36 with $\rin = 0.3$ m and $\rout = 2.1$ m, which corresponds to an interferer density $\lambda = 2.25$ (passengers/m$^2$), a typical density scenario
that approximates the peak-hour passenger load in urban train cars \cite{trainDensity}. 

\begin{figure}
\centering
\subfigure[The locations of the users in a uniform grid of size 7 $\times$ 7 restricted to an annulus. The twelve users located outside the circle are deleted from the network.]{
\includegraphics[totalheight=3.2in, width=4.8in]{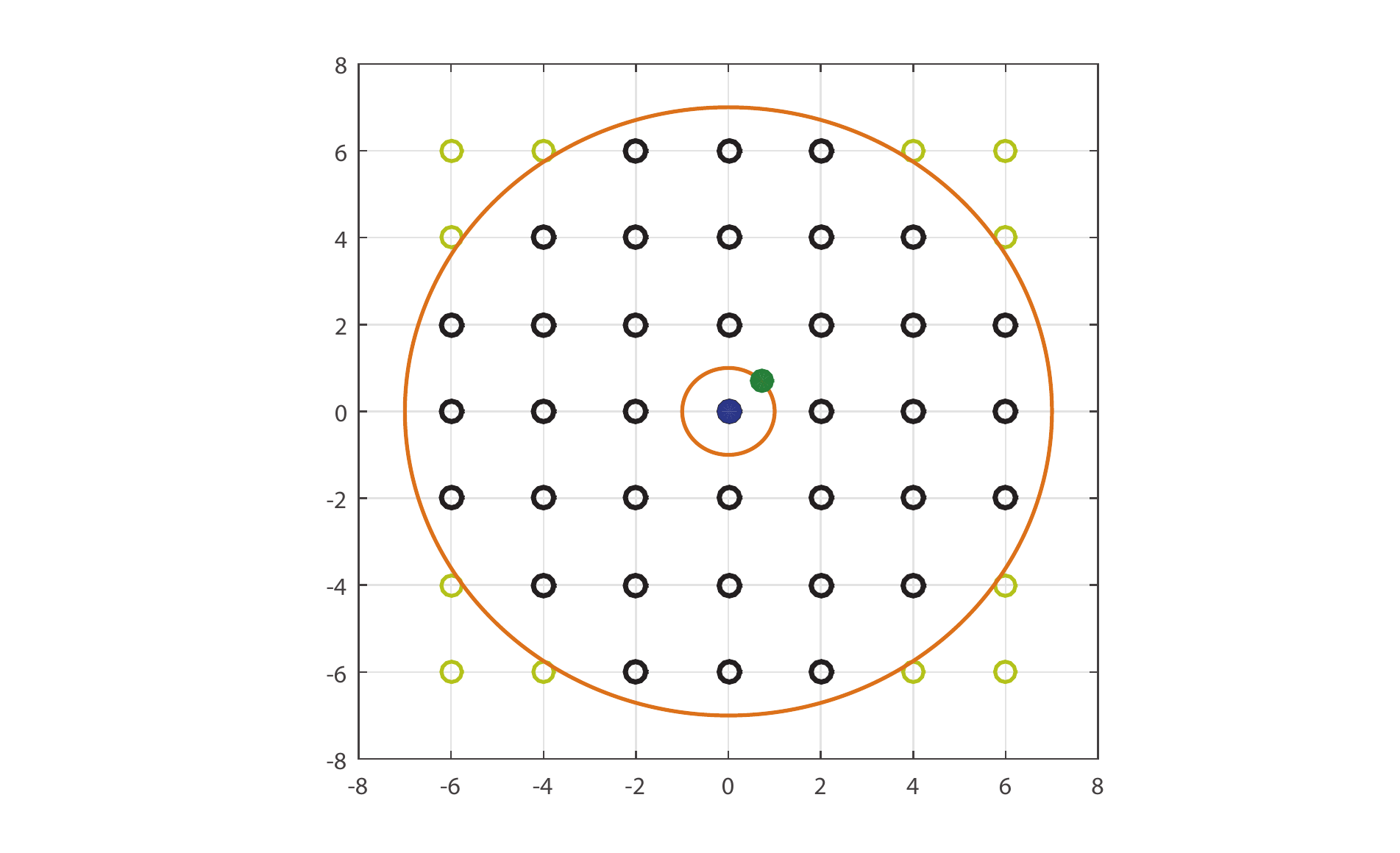}    
\label{fig:fixed_grid}
\vspace{-0.2in}        
}
\subfigure[The blocking cones associated with the blockages and the blocked users (filled circles).]{
\includegraphics[totalheight=3.2in, width=4.8in]{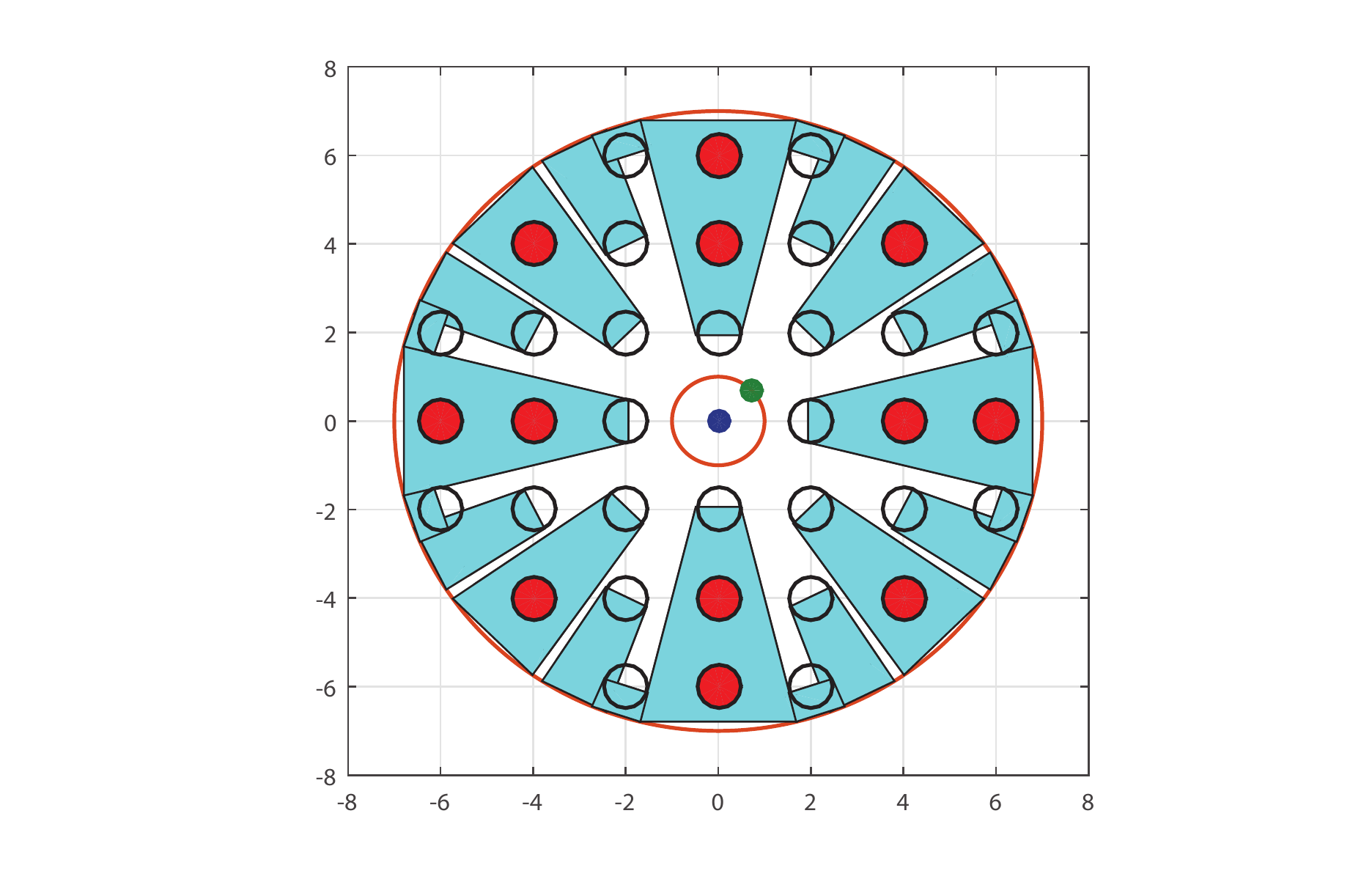}     
\label{fig:fixed_gridCones}
\vspace{-0.2in}    
}
\caption{The fixed geometry considered in Section \ref{sec:Num_results} and the blocking cones associated with the users. The reference receiver and the projection of the transmitter onto $\mathcal{A}$ are shown in blue and green, respectively.}    
\label{fig:fixed_geo}
\end{figure}

Fig. \ref{fig:fixed_geo} shows users placed according to Fig. \ref{fig:fixed_grid} along with the blocking cones (Fig. \ref{fig:fixed_gridCones}) assuming that each user is associated with a blockage of width $W=0.3$ m. The blocked users are indicated by filled circles. We next provide numerical results for this fixed geometry. The dependence of coverage probability on the transmission probability $\pt$ of the interferers for a fixed transmitter and receiver antenna array configuration is shown in Fig. \ref{fig:coverageVpt} for the case when the transmitters and the receiver use omni-directional antenna. It is seen that, as expected, a higher value of $\pt$ leads to lower coverage probability for a given SINR threshold. We observe similar results for other antenna configurations as well. 
\begin{figure}
\centering
\includegraphics[totalheight=3.2in, width=3.5in]{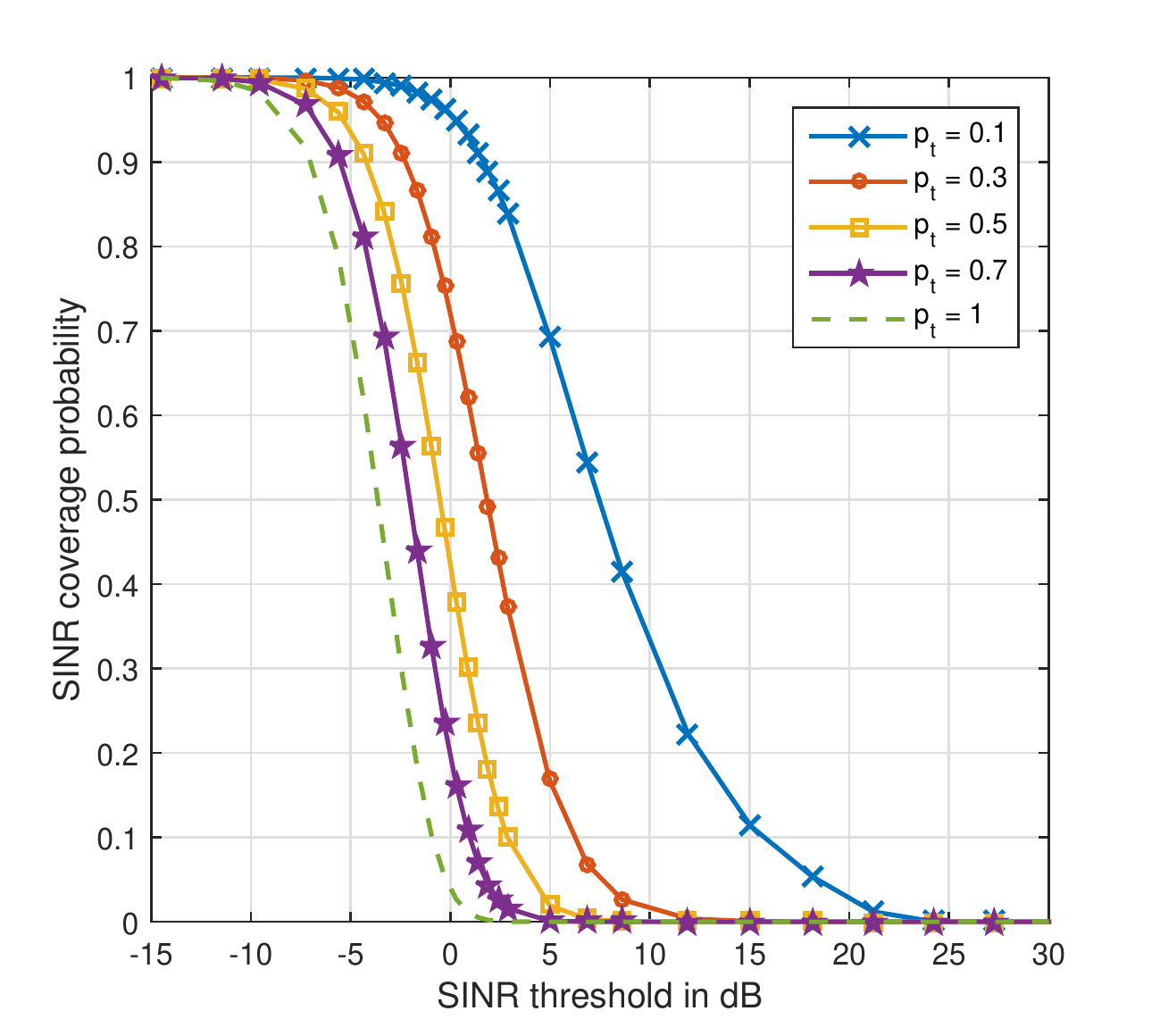}
\vspace{-0.2in}  
\caption{SINR coverage probability when the users are placed in the fixed positions indicated in Fig. \ref{fig:fixed_grid} for different transmission probabilities $\pt$ $\Nt = \Nr = 1$. Larger $\pt$ results in smaller coverage probability.}     
\label{fig:coverageVpt}        
\end{figure}

The CCDF of spectral efficiency for different antenna configurations is shown in Fig. \ref{fig:rate_NtXNr} for a given random-access probability. Here we let $\pt = 1$. Clearly, using more antennas at the transmitters and the receiver results in significant improvement in the rate. This is because larger antenna arrays provide more directed transmission and reception, thus improving the SINR due to the increased antenna gains of the reference link as well as the reduced beamwidth of the interfering receivers, which reduces the likelihood that the reference receiver falls within a randomly oriented reciever's main-lobe. 
The ergodic spectral efficiency for various antenna configurations when $\pt = 1$ is summarized in Table \ref{table:avgrate}. It can be seen that having larger $\Nt$ is more advantageous than having a larger $\Nr$ for the fixed geometry considered in this section. This can also be seen from Fig. \ref{fig:rate_contour}, which is a contour plot of the ergodic spectral efficiency as a function of $\Nt \geq 4$ and $\Nr \geq 4$ for a random-access probability $\pt = 0.5$ and found by interpolating the computed values of the ergodic spectral efficiency for all integer values of $\Nt$ and $\Nr$ from 4 to 16. We attribute the asymmetric behavior with respect to $\Nt$ and $\Nr$ in Fig. \ref{fig:rate_contour} to the fact that the interferers have smaller probability of pointing their main-lobes to the reference receiver when $\Nt$ is large.

\begin{figure}[ht]
\centering
\includegraphics[totalheight=3.2in, width=3.5in]{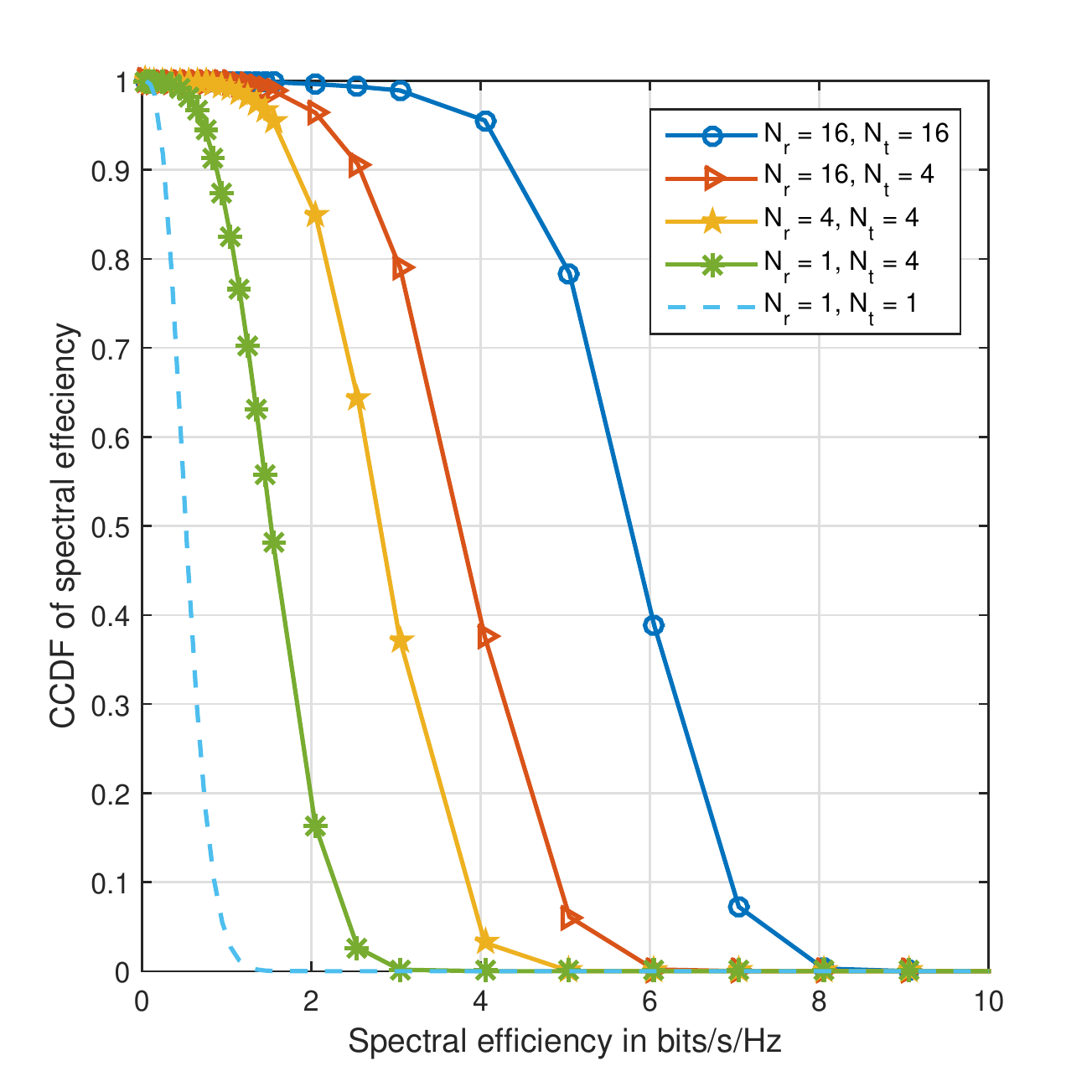}
\vspace{-0.2in}  
\caption{CCDF of spectral efficiency for different antenna configurations when $\pt = 1$ and the fixed network geometry in Fig. \ref{fig:fixed_grid}. Spectral efficiency is improved significantly with more antennas.}
\label{fig:rate_NtXNr}        
\end{figure}

\begin{table}
\centering
\caption{Ergodic spectral efficiency for various antenna configurations} \label{table:avgrate}
\vspace{-0.1in}
\begin{tabular}{|l||*{3}{c|}}\hline
\backslashbox{$\Nt $}{$\Nr $} & \makebox[2em]{1} & \makebox[2em]{4} & \makebox[2em]{16} \\ \hline
\hline
1 & 0.1762 & 0.8710 & 1.5481 \\
\hline 
4 & 1.0880 & 2.3282 & 3.2820 \\
\hline 
16 & 2.6734 & 4.2190 & 5.2850 \\
\hline 
\end{tabular}
\vspace{-0.2in}
\end{table} 

\begin{figure}
\centering
\includegraphics[totalheight=3.2in, width=3.5in]{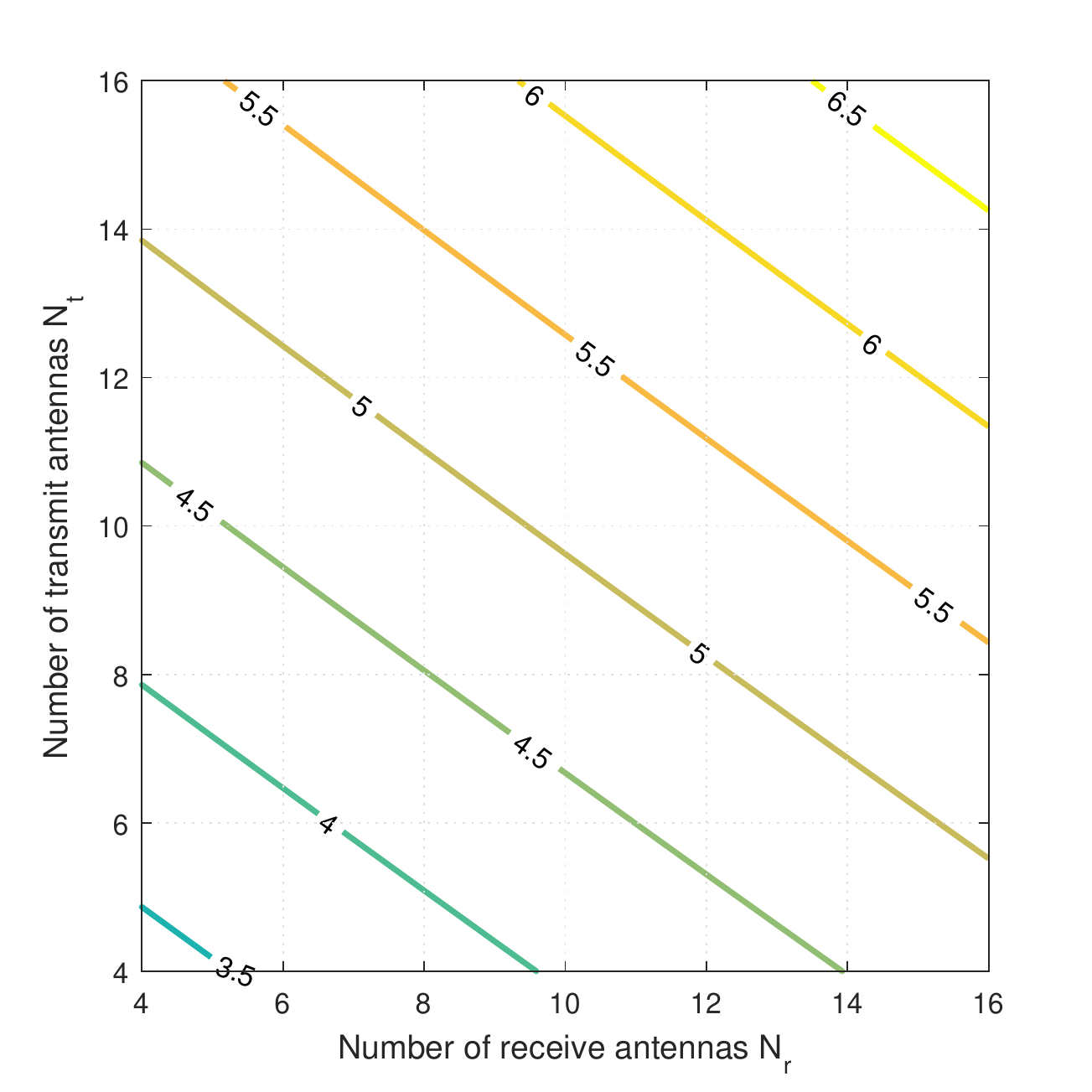}
\vspace{-0.2in} 
\caption{Contour plot of ergodic spectral efficiency as a function of $\Nt$ and $\Nr$ when $\pt = 0.5$ for the fixed interferer placement according to Fig. \ref{fig:fixed_grid}.}     
\label{fig:rate_contour}        
\end{figure}

%

\section{Spatial Averaging for Random Geometries}
\label{sec:SpatialAvg}
Infinite-sized networks are usually analyzed by assuming the interferers are drawn from a point process, then determining the coverage and rate of a typical user by averaging over the network geometry. Intuitively, this can be thought of as the performance seen by a user that wanders throughout the network, and thus sees many different network topologies. In this section, we outline the approach we take to analyze the interference in a finite sized mmWave-based device-to-device network when the users are located at random locations. While in reality, users are generally spaced far enough apart that their bodies don't overlap, for mathematical tractability we assume that the users are independently placed (which include cases with overlaps).

The spatially averaged CCDF of the SINR can be derived by taking the expectation of $P_{\mathsf{c}}(\beta, \bf{\Omega})$ (\eqref{Equation:prod_form} in Section \ref{sec:Interfere_model}) with respect to $\mathbf{\Omega}$
\be
P_{\mathsf{c}}(\beta) = & \mathbb{P}\left[\gamma > \beta \right] =& \mathbb{E}_{\bf{\Omega}}[P_{\mathsf{c}}(\beta, \bf{\Omega})] \label{Equation:CCDF_ways}.
\ee This can be performed in two ways: (1) through the use of simulation, and (2) analytically. In the former method, we can determine the coverage and rate as follows.  We randomly place the $K$ potentially interfering users drawn from a binomial point process (BPP) and compute the corresponding coverage and rate for each network realization. This is repeated a large number of times to obtain the spatial average.  While in the limit of an infinite number of trials, this approach provides the exact spatially averaged performance, the downside is that it is computationally expensive to obtain. The second method is similar to the approach in \cite{valenti:2014}, which however only considered Rayleigh fading for the links. The spatially averaged outage probability is found in this approach by unconditioning the results we obtained in Section \ref{sec:Interfere_model} that were conditioned on the location of the interferers and the blockages. Using this approach, we develop closed-form expression for the spatially averaged CCDF of the SINR, which is then validated against the results obtained via simulation.

\subsection{Assumptions}
\label{sec:Assumptions}
Taking the expectation is complicated by a number of factors that arise primarily due to the coupling of interferers and blockages, since each user is not only a potential source of interference, but is also a potential source of blockage. To make the problem more tractable, we adopt a sequence of assumptions, with each assumption building upon the previous one. Simulation results show the validity of the assumptions.

\emph{Assumption 1:}  {\bf The locations of the blockages and interferers are related by an orbital model.} Even if a user $B_i$ (which also denotes blockage) is in a fixed location, its transmitter $X_i$ could be randomly positioned around it. Hence, we specify the location of $X_i$ in the 2-D plane relative to $B_i$ by placing it randomly on a circle of radius $d > W/2$ and center $B_i$. Self-blocking is now inherent in the model, i.e., if $X_i$ is behind $B_i$, then it is blocked. We refer to this model as the \textit{orbital model}. This is illustrated in Fig. \ref{fig:orbital_model}. 

\begin{figure}
\centering
\includegraphics[totalheight=1.2in,width=2.4in]{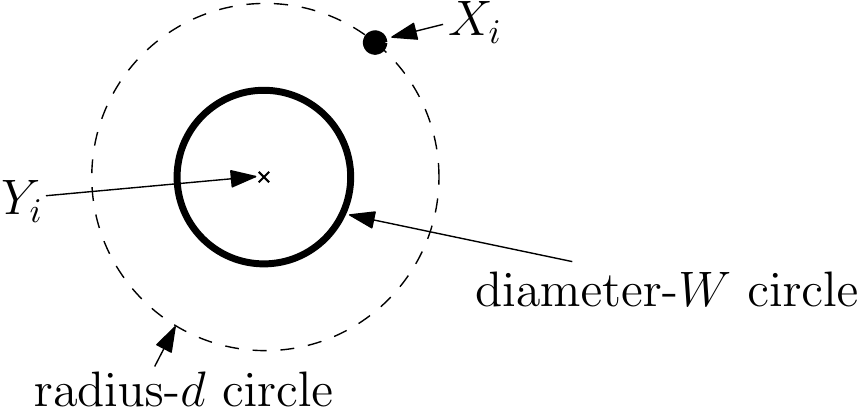}
\vspace{-0.2in}  
\caption{An illustration of the orbital model with the locations of blockage $B_i$ and the interferer $X_i$.}     
\label{fig:orbital_model}
\vspace{-0.2in}        
\end{figure}

\emph{Assumption 2:}  {\bf The locations of the blockages and interferers are drawn from independent point processes.} Though this assumption is not by itself that useful, it is a stepping stone towards a tractable analysis. With this assumption we can still obtain the corresponding coverage and rate using the aforementioned simulation, only now the simulation can lay out $K$ interferers and $K$ blockers independently.  

\emph{Assumption 3:} {\bf The blockage states of the interferers are independent.}  This assumes that there is no correlation in the blockage process, even though in reality a transmitter that is close to a blocked transmitter is likely to also be blocked.  With this assumption we first determine the blockage probability $\pb(r)$, which gives the probability of blockage arising from other users as a function of distance $r$ to the reference receiver.  We can determine $\pb(r)$ either empirically (through running simulations of the blockage process) or by using results from random shape theory \cite{Bai2014}.  Then, having established $\pb(r)$, we run a new simulation whereby we first place the interferers according to a BPP, then we independently mark each interferer as being blocked with probability $\pb(r)$. 
Note that now, we need not place the blockages in the simulation.

\emph{Assumption 4:} {\bf All interferers beyond some distance $\RB$ are NLOS and those closer than $\RB$ are LOS.}  Here, we replace the irregular and random LOS boundary with an equivalent ball.  The value of $\RB$ can be found by matching the first moments (Criterion 1 of \cite{bai:2014}), or alternatively, it could be found by matching the average rate.

\subsection{Analysis of Blocking Probability}
\begin{lemma}
\label{lemma:snowboard}
When the network region $\mathcal{A}$ is an annulus with inner radius ${\rin}$ and outer radius ${\rout}$ and blockages have diameter $W$, the probability that an interferer at distance $r$ from the reference receiver is blocked by any of the $K$ blockages that are independently and uniformly distributed over $\mathcal{A}$ is 
\begin{eqnarray}
p_{\mathsf{b}}(r)
   & = &
   \begin{cases}
        1 - \left( 1 - \frac{rW + \frac{\pi W^2}{8} - \mu }{|\mathcal{A}|}  \right)^{K} & \mbox{if ${\rin} \leq r \leq {\rout} - \frac{W}{2}$} \\
         1 - \left( 1 - \frac{rW - \mu + \nu }{|\mathcal{A}|}  \right)^{K} & \mbox{if ${\rout} - \frac{W}{2} \leq r \leq {\rout}$} 
   \end{cases}, \label{Equation:pbr}
\end{eqnarray} 
where 
\begin{eqnarray}
\mu &= &\frac{W}{2}\sqrt{{\rin}^2 - \left(\frac{W}{2}\right)^2}+{\rin}^2\arcsin\left(\frac{W}{2{\rin}}\right)\\
\nonumber
\nu &=& \left(\frac{W}{2}\right)^2\arcsin\left(\frac{{\rout}^2 - \left(\frac{W}{2}\right)^2 - r^2}{rW}\right)+ {\rout}^2\arccos\left(\frac{{\rout}^2 - \left(\frac{W}{2}\right)^2 + r^2}{2r{\rout}}\right)\\
&&~~~~~-2\sqrt{s(s-r)(s-\frac{W}{2})(s- \frac{\rout}{2})}; s = \frac{{\rout}+r+W/2}{2}.
\end{eqnarray}
\end{lemma}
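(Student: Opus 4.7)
My plan is to reduce $\pb(r)$ to a single-blocker calculation using independence, then characterize the ``blocking region'' geometrically and compute its area inside $\mathcal{A}$ case by case. Because the $K$ blockages are i.i.d.\ uniform on $\mathcal{A}$, the events that a given blockage blocks the interferer are independent with common probability $q(r) = |\mathcal{B}(r) \cap \mathcal{A}|/|\mathcal{A}|$, where $\mathcal{B}(r)$ is the set of blockage-center locations whose associated disk of radius $W/2$ intersects the segment from the origin to the interferer $X_i$. It then immediately follows that $\pb(r) = 1 - (1 - q(r))^K$, which is already the outer form of the claimed expression, so the remaining task is to show that $|\mathcal{B}(r) \cap \mathcal{A}|$ equals $rW + \pi W^2/8 - \mu$ in the first case and $rW + \nu - \mu$ in the second.

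Placing coordinates with $X_i = (r, 0)$, the region $\mathcal{B}(r)$ is the Minkowski sum of the segment $[0,X_i]$ with a closed disk of radius $W/2$, i.e., a ``stadium'' made of a rectangle $\mathcal{R} = [0, r] \times [-W/2, W/2]$ flanked by a half-disk $\mathcal{H}_0$ of radius $W/2$ behind the origin and a half-disk $\mathcal{H}_X$ of radius $W/2$ beyond $X_i$. The two cases in the lemma are governed by whether $\mathcal{H}_X$ fits inside the outer circle ($r \leq \rout - W/2$) or juts out of it.

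In the first case, $\mathcal{H}_0$ is fully contained in the central hole of the annulus (using $\rin > W/2$ in the operating regime), so it contributes nothing to $|\mathcal{B}(r) \cap \mathcal{A}|$, while $\mathcal{H}_X$ contributes its full $\pi W^2/8$. The only other correction to $|\mathcal{R}| = rW$ is to subtract the area where $\mathcal{R}$ pokes into the inner disk $D(0, \rin)$, which by horizontal slicing equals $\int_{-W/2}^{W/2} \sqrt{\rin^2 - y^2}\, \mathsf{d}y$; the standard antiderivative $\tfrac12\bigl(y\sqrt{a^2-y^2} + a^2 \arcsin(y/a)\bigr)$ evaluates this integral to exactly $\mu$, giving the first branch of the lemma.

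In the second case, the same $\mu$-correction applies to $\mathcal{R}$, but $\mathcal{H}_X$ now protrudes through the outer circle, so $\pi W^2/8$ must be replaced by $\nu = |\mathcal{H}_X \cap D(0,\rout)|$. I would compute $\nu$ via the classical two-disk lens formula applied to disks of radii $W/2$ and $\rout$ whose centers lie at distance $r$: the lens decomposes into two circular segments (producing the $\arcsin$ and $\arccos$ terms in $\nu$) minus twice the area of the triangle with sides $r$, $W/2$, $\rout$, which Heron's formula turns into $2\sqrt{s(s-r)(s-W/2)(s-\rout)}$ with $s = (r+W/2+\rout)/2$. Mirror symmetry about the line through the origin and $X_i$ cuts this lens in half to deliver $\nu$ directly, provided the other half of the disk at $X_i$ (and the rectangle corners $(r, \pm W/2)$) still lies inside $D(0,\rout)$. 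Verifying these containment conditions throughout the sub-regime -- so that no further truncation of $\mathcal{R}$ or $\mathcal{H}_X$ is needed and the simple additive formula $rW + \nu - \mu$ remains exact -- is the main obstacle; once done, the lemma follows by substituting $|\mathcal{B}(r) \cap \mathcal{A}|$ into $\pb(r) = 1 - (1 - |\mathcal{B}(r) \cap \mathcal{A}|/|\mathcal{A}|)^K$.
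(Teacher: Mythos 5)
Your overall route is the same as the paper's: independence of the $K$ uniform blockages gives $\pb(r)=1-(1-q(r))^K$ with $q(r)$ the normalized area of the stadium-shaped blocking region clipped to the annulus, and your first branch is obtained exactly as in the paper --- your integral $\int_{-W/2}^{W/2}\sqrt{\rin^2-y^2}\,\mathsf{d}y$ for the inner-disk correction evaluates to the same $\mu$ that the paper gets by decomposing that overlap into a sector of the inner circle plus two right triangles.

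The one step that does not work as stated is your extraction of $\nu$ from the two-disk lens. The lens $D(X_i,W/2)\cap D(0,\rout)$ is symmetric about the line through the two centers, i.e., the line through the origin and $X_i$; cutting it in half along that mirror line yields its upper and lower halves, not $\mathcal{H}_X\cap D(0,\rout)$, which is the near/far split along the perpendicular line $\{x=r\}$. The identity you actually need is $\nu=\mathrm{lens}-\tfrac{\pi W^2}{8}$: the near half of $D(X_i,W/2)$ lies entirely inside $D(0,\rout)$ (its farthest points from the origin are the corners $(r,\pm W/2)$ at distance $\sqrt{r^2+W^2/4}$, which is at most $\rout$ except on a thin terminal sub-interval of $r$), so subtracting its area $\tfrac{\pi W^2}{8}$ from the lens leaves exactly the protruding half clipped to the outer circle. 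This subtraction is also what converts the lens formula's $(W/2)^2\arccos\bigl(\tfrac{r^2+W^2/4-\rout^2}{rW}\bigr)$ into the $(W/2)^2\arcsin\bigl(\tfrac{\rout^2-W^2/4-r^2}{rW}\bigr)$ term of the lemma, via $\arccos(u)=\tfrac{\pi}{2}+\arcsin(-u)$. Two further remarks: your Heron factor $(s-\rout)$ is the correct one for a triangle with sides $r$, $W/2$, $\rout$ (the $(s-\rout/2)$ in the statement appears to be a typo), and the containment caveat you flag for $r>\sqrt{\rout^2-W^2/4}$, where the rectangle corners and part of the near half-disk leave $D(0,\rout)$, is genuine but is not addressed in the paper's proof either.
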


\begin{proof}
See Appendix \ref{Appdx:proof_snowboard}
\end{proof}
The distance-dependent blockage probability $p_{\mathsf{b}}(r)$ is shown for an annulus of inner radius $\rin =1$ and outer radius $\rout = 7$ in Fig. \ref{fig:p_br} along with that obtained via simulation. The simulation results closely match the analytical approximations derived using Assumptions 1-3. Also, as expected, the probability that an interferer is blocked increases as the distance $r$ between the interferer and the reference receiver is increased. The dependence of $p_{\mathsf{b}}(r)$ on $K$, the number of interferers is shown when $W=1$ in Fig. \ref{fig:pbVsK} and the dependence on the width of the blockages $W$ is shown when $K=36$ in Fig. \ref{fig:pbVsW}. It is seen that with larger values for $K$ and $W$ the blockage probability is more for any given distance $r$ from the reference receiver.
\begin{figure}
\centering
\subfigure[$p_{\mathsf{b}}(r)$ vs $r$ for different values of $K$ when $W = 1$.]{
\includegraphics[totalheight=3.2in, width=3.5in]{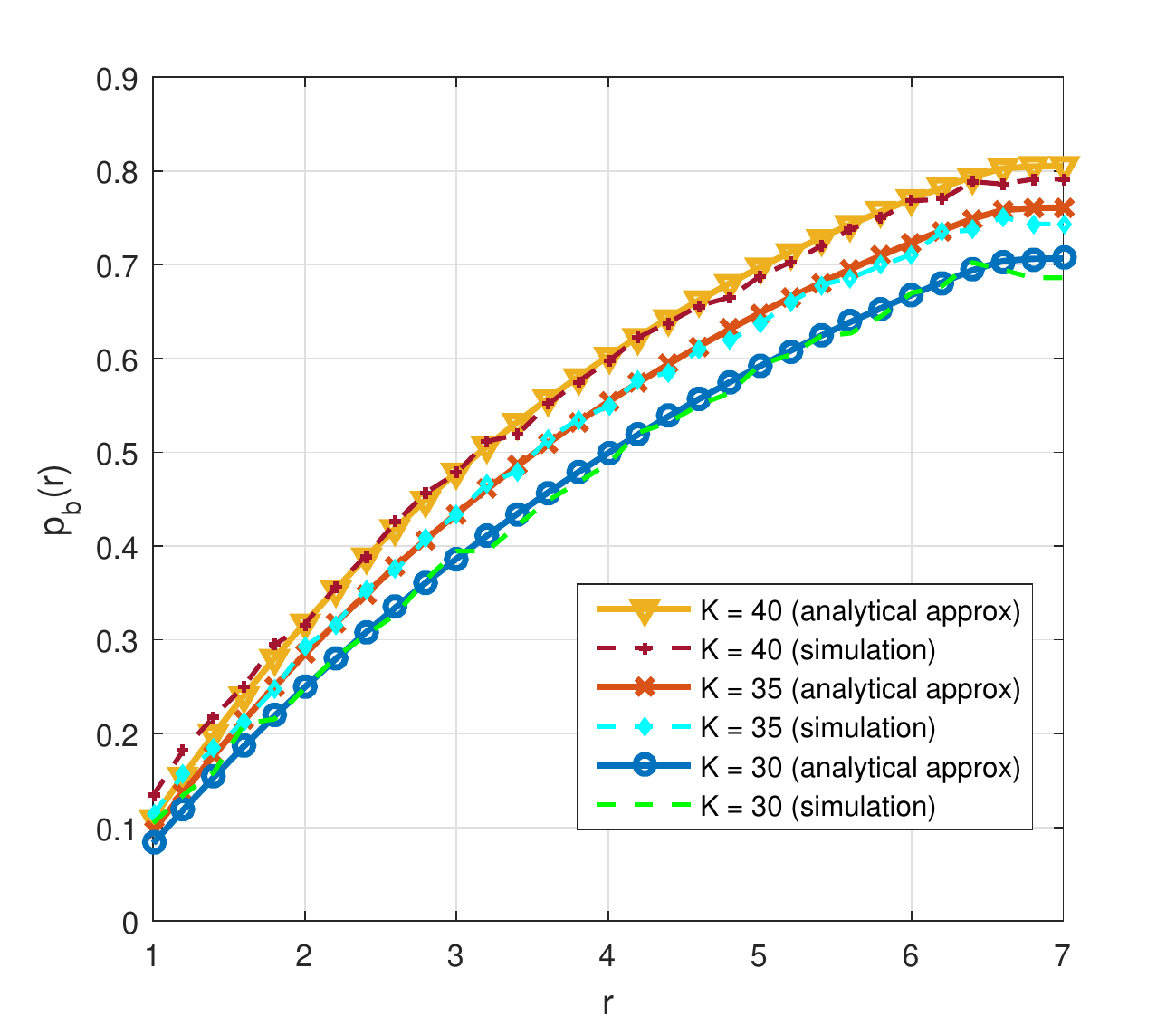}
\label{fig:pbVsK}
}
\subfigure[$p_{\mathsf{b}}(r)$ vs $r$ for different values of $W$ when $K=36$.]{
\includegraphics[totalheight=3.2in, width=3.5in]{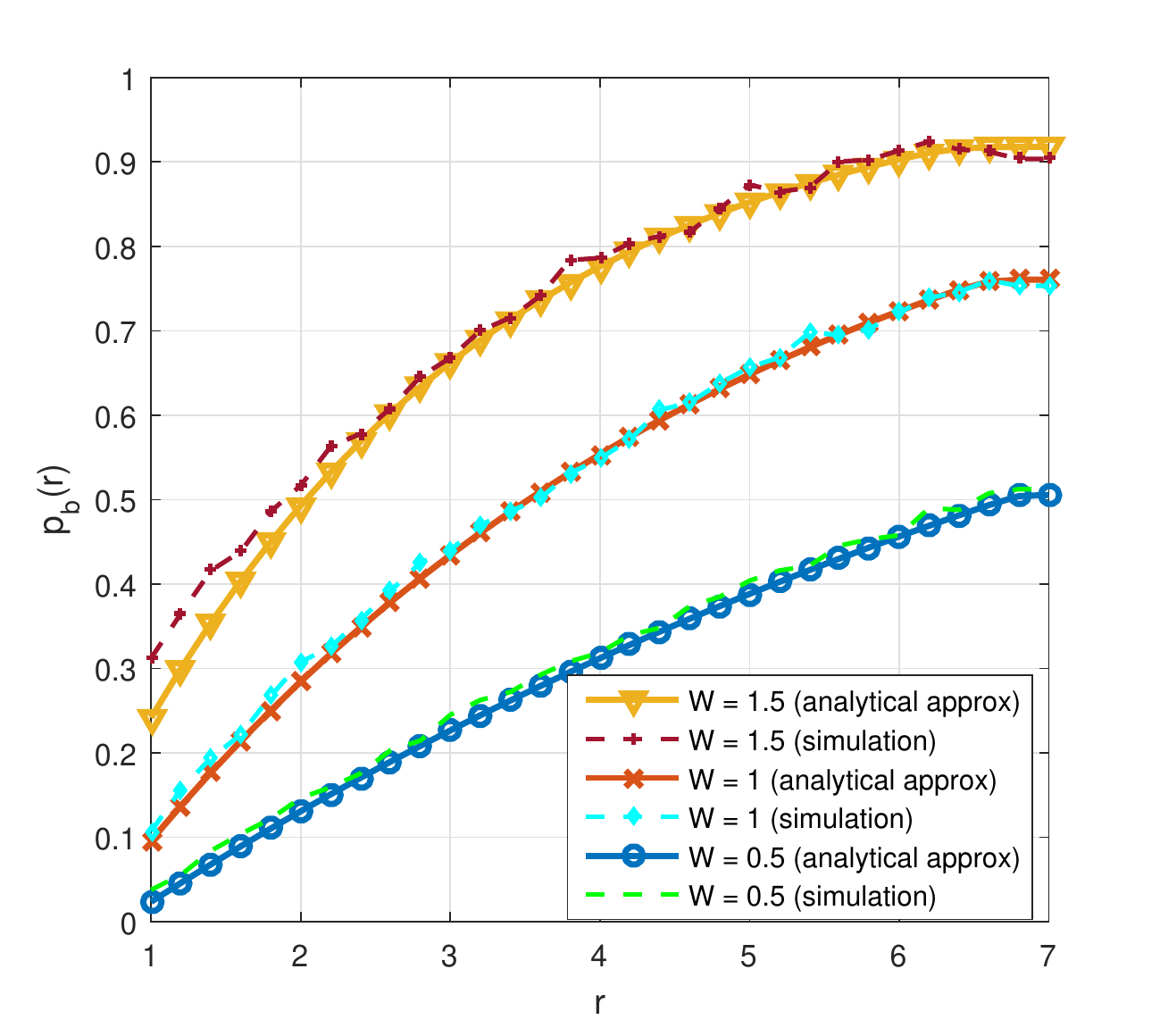}
\label{fig:pbVsW}
}
\caption{The distance-dependent blockage probability as $K$ and $W$ are varied. Here we assume $\rin = 1$ and $\rout = 7$.}
\label{fig:p_br}
\end{figure}

To evaluate $\RB$ under Assumption 4, we compute the mean number of interferers that are not blocked as
\begin{eqnarray}
\rho &=& 2\pi \frac{K}{|\mathcal{A}|}\int_{\rin}^{\rout} \left( 1- p_{\mathsf{b}}(r)\right)r dr.
\end{eqnarray}
Equating this average number of non-blocked interferers to the number of interferers in an equivalent LOS ball of radius $R_\mathsf{B}$ leads to the expression
\begin{eqnarray}
R_\mathsf{B} &=& \left(2\int_{\rin}^{\rout} \left( 1- p_{\mathsf{b}}(r)\right)r dr + {\rin}^2\right)^{0.5}.
\end{eqnarray} Now $\pb(r)$ is approximated by $\tilde{p}_{\mathsf{b}}(r)$, which is a step function with a step up at distance $\RB$,
\be \tilde{p}_{\mathsf{b}}(r)
   =
   \begin{cases}
      0 & \mbox{if ${\rin} \leq r \leq \RB$} \\
      1 & \mbox{if $\RB < r \leq {\rout}$}
   \end{cases}.
\ee   

\subsection{Analysis of Coverage Probability}
\label{ssec:Closed_form}
Under assumptions 1 -- 4, we can derive the probability distribution of $f_{\Omega_i}(w)$ of $\Omega_i$ which are now independent random variables that depend on the location of $X_i$ for $\{ 1,~2,...,~K\}$ as follows \vspace{-0.2in}  
\be
\Omega_i  = \mathfrak{g}_1(\phi_i) R_i^{-\mathfrak{g}_2(R_i)},&& \label{Equation:Omega_LOSBall} 
\ee where
\be
   \mathfrak{g}_1(\phi_i)
   =
   \begin{cases}
      a_i & \mbox{ if $|\phi_i - \phi_0| \leq  \frac{\thetar^{(\mathsf{a})}}{2}$} \\
      b_i & \mbox{ otherwise}
   \end{cases},
   \ee
   and
   \be
   \mathfrak{g}_2(R_i)
  =
   \begin{cases}
      \alphaL & \mbox{ if $R_i \leq R_\mathsf{B}$} \\
      \alphaN & \mbox{ if $R_i > R_\mathsf{B}$}
   \end{cases},     
\ee $a_i = \frac{P_i}{P_0} \Gr$ and $b_i = \frac{P_i}{P_0} \gr$. Additionally, we have
\be \label{Equation:mi}
m_i
   &=&
   \begin{cases}
      \mL & \mbox{ if $R_i \leq R_\mathsf{B}$} \\
      \mN & \mbox{ if $R_i > R_\mathsf{B}$}
   \end{cases}.    
\end{eqnarray}
For $X_i$ drawn from a BPP, $\phi_i$ is uniform random variable in the interval $\left[0, 2\pi\right)$ and $R_i$ has pdf 
\be
\label{Equation:pdf_Ri}
f_{R_i}(r) =& \frac{2\pi r}{|\mathcal{A}|} & \rin \leq r \leq \rout. 
\ee
Next, note that for $|\phi_i - \phi_0| \leq  \frac{\thetar^{(\mathsf{a})}}{2}$ and $\rin \leq R_i \leq R_\mathsf{B}$, $\Omega_i = a_iR_i^{-\alphaL}$ has conditional pdf
\be
\label{Equation:pdf_Omeg_cond}
f_{\Omega_i} ( \omega ) &=& \frac{2\pi\omega^{-\frac{2+\alphaL}{\alphaL}}}{\alphaL|\mathcal{A}|\left( \frac{\pi(\RB^2-\rin^2)}{|\mathcal{A}|}\right)}a_i^{{2}/{\alphaL}}~\mbox{for~ $\frac{a_i}{R_{\mathsf{B}}^{\alphaL}} \leq w \leq \frac{a_i}{\rin^{\alphaL}}$}.
\ee When $|\phi_i - \phi_0| > \frac{\thetar^{(\mathsf{a})}}{2}$, $a_i$ in \eqref{Equation:pdf_Omeg_cond} is replaced with $b_i$, while the $\alphaL$ in \eqref{Equation:pdf_Omeg_cond} is replaced with $\alphaN$ when $\RB < R_i \leq \rout$.
These four cases can be captured by defining a function $\mathcal{D}\left(\omega;[\omega_1, \omega_2];c;\alpha\right)$ as
\be
\mathcal{D}\left(\omega;[\omega_1, \omega_2];c;\alpha\right) &=& \frac{2\pi\omega^{-\frac{2+\alpha}{\alpha}}}{\alpha|\mathcal{A}|}c^{{2}/{\alpha}}\left\lbrace u(\omega - \omega_1)- u(\omega - \omega_2)\right\rbrace, \label{Equation:short_defn}
\ee $u(\cdot)$ being the unit step function, it follows therefore that the pdf of $\Omega_i$ has the form
\be
\nonumber f_{\Omega_i}(w) &=& \frac{\thetar^{(\mathsf{a})}}{2\pi} \left[\mathcal{D}\left(\omega ;\left[\frac{a_i}{R_{\mathsf{B}}^{\alphaL}}, \frac{a_i}{\rin^{\alphaL}}\right];a_i;\alphaL\right) + \mathcal{D}\left(\omega ;\left[\frac{a_i}{\rout^{\alphaN}}, \frac{a_i}{R_{\mathsf{B}}^{\alphaN}}\right];a_i;\alphaN\right)\right] \\
&& \hspace{-0.4in} + \left(1- \frac{\thetar^{(\mathsf{a})}}{2\pi}\right) \left[ \mathcal{D}\left(\omega ;\left[\frac{b_i}{R_{\mathsf{B}}^{\alphaL}}, \frac{b_i}{\rin^{\alphaL}}\right];b_i;\alphaL\right) + \mathcal{D}\left(\omega ;\left[\frac{b_i}{\rout^{\alphaN}}, \frac{b_i}{R_{\mathsf{B}}^{\alphaN}}\right];b_i;\alphaN\right) \right]. \label{Equation:omega_pdf}
\ee
Using the definition in \eqref{Equation:Coverage} and the expression in \eqref{Equation:prod_form}, we can write the spatially averaged CCDF of the SINR by taking an expectation with respect to $\{\Omega_i\}$ as
\be
P_{\mathsf{c}}(\beta) =& \hspace{-0.1in} \mathbb{E}_{\bf{\Omega}}[P_{\mathsf{c}}(\beta, \bf{\Omega})]=& \hspace{-0.1in} e^{-\beta_0\sigma^2}\sum_{\ell = 0}^{m_0 - 1}\frac{(\beta_0\sigma^2)^\ell}{\ell!}\sum_{t = 0}^{\ell}\binom {\ell}{t} \frac{t!}{\sigma^{2t}} \sum_{  {\mathcal S}_t }\left(\prod_{i=1}^{K} \mathbb{E}_{\Omega_i}\left[{\mathcal{G}}_{t_i}(\Omega_i)\right]\right). \label{Equation:CCDF_spa_avg}
\ee
To evaluate $\mathbb{E}_{\Omega_i}\left[{\mathcal{G}}_{t_i}(\Omega_i)\right]$, we note that the integral
\be
\nonumber \int_0^{\infty} \mathcal{D}\left(\omega ;[\omega_1, \omega_2];c;\alpha\right) {\mathcal{G}}_{t_i}(\Omega_i)\mathsf{d}\omega &=& \int_{\omega_1}^{\omega_2} \frac{2\pi\omega^{-\frac{2+\alpha}{\alpha}}}{\alpha |\mathcal{A}|}c^{{2}/{\alpha}} {\mathcal{G}}_{t_i}(\Omega_i)\mathsf{d}\omega \\
\nonumber = (1-\pt)\frac{c^{2/\alpha}}{|\mathcal{A}|}\left[\frac{1}{{\omega_1}^{2/\alpha}} - \frac{1}{{\omega_2}^{2/\alpha}}\right]\hspace{-0.05in} \delta[t_i] \hspace{-0.1in} &+& \hspace{-0.1in}\pt \mathcal{K}_{t_i}(\alpha,c)\hspace{-0.05in} \left\lbrace \pM\hspace{-0.05in}\left[\frac{\mathcal{M}_{t_i}\left(\Gt \omega_1;\alpha\right)}{{\omega_1}^{2/\alpha}} - \frac{\mathcal{M}_{t_i}\left(\Gt \omega_2;\alpha\right)}{{\omega_2}^{2/\alpha}}\right]\right. \\
&+& \hspace{-0.1in} \left.\left(1- \pM\right)\left[\frac{\mathcal{M}_{t_i}\left(\gt \omega_1;\alpha\right)}{{\omega_1}^{2/\alpha}} - \frac{\mathcal{M}_{t_i}\left(\gt \omega_2;\alpha\right)}{{\omega_2}^{2/\alpha}}\right]\right\rbrace , \label{Equation:spatial_avg}
\ee where 
\be
\mathcal{K}_{t_i}(\alpha,c) &=& \frac{2\pi m_i^{m_i}\Gamma(m_i+t_i) \beta_0^{-(m_i + t_i)} c^{2/\alpha}}{\Gamma(m_i)|\mathcal{A}|(t_i!)\alpha}, \\
\mathcal{M}_{t_i}(x; \alpha) &=& \frac{{}_2F_1\left(m_i + t_i, m_i + \frac{2}{\alpha}; m_i + \frac{2}{\alpha} + 1; -\frac{m_i}{x\beta_0}\right)}{x^{m_i}\left(m_i + \frac{2}{\alpha}\right)},
\ee $m_i$ is as given by \eqref{Equation:mi} and ${}_2F_1\left(a, b;c;z\right)$ is the Gauss hypergeometric function.
Using the formulation in \eqref{Equation:spatial_avg} for every term in \eqref{Equation:omega_pdf}, $\mathbb{E}_{\Omega_i}\left[{\mathcal{G}}_{t_i}(\Omega_i)\right]$ can be evaluated so that a closed-form expression for the spatially averaged CCDF of the SINR can be computed from \eqref{Equation:CCDF_spa_avg}. Solving \eqref{Equation:ergSpecEffeciency} through numerical integration, but with \eqref{Equation:CCDF_spa_avg} in the integrand, we can get the spatially averaged ergodic spectral efficiency. 

\section{Results for Random Geometry}
\label{sec:sim_results}
This section gives simulation and numerical results for coverage probability and spectral efficiency, which confirm the validity of assumptions made for spatial averaging. Results generated using assumptions 1 to 3 are all done using a Monte Carlo simulation (which operates by randomly placing the interferers and blockages according to the spatial model, but then computing the conditional outage using \eqref{Equation:prod_form}). Results generated under assumption 4 can be generated using either a simulation or the analytical expression, and the methodology used will be clarified when the result is presented.
The antenna parameters are assumed to be the same as that used earlier, as summarized in Table \ref{table:antenna_param}. The network region $\mathcal{A}$ considered here is an annulus with inner radius $\rin = 0.3$ m and outer radius $\rout = 2.1$ m. The users are assumed to be randomly distributed in $\mathcal{A}$ according to a BPP. The simulation parameters used are summarized in Table \ref{table:num_param}. These are the values used if not otherwise stated. The quantities $K$, $W$ and $\sigma^2$ are parameters we vary for comparison later on. Varying $K$ is equivalent to changing the interferer density $\lambda$ since $|\mathcal{A}|$ is assumed to be fixed. Similarly, varying $W$ amounts to changing the parameters for blockages. Finally, increasing  $\sigma^2$ captures the effect of more noise in the receiver or a lower transmit power.


To understand the significance of scaling of the size of the antenna arrays, we plot coverage probability against SINR for 3 cases that have the same product of $\Nt \times \Nr$. This is shown in Fig. \ref{fig:coverage_NtXNr} where we let $\pt = 0.5$ under only Assumption 1. As observed for the fixed geometry case in Section \ref{sec:Num_results}, we see that using more transmit antennas is better than having more receive antennas. This is also seen in Table \ref{table:avgrate_random} which summarizes the ergodic spectral efficiency for various antenna array configurations. The reason for the asymmetrical behavior with respect to $\Nt$ and $\Nr$ is that while larger $\Nt$ results in reduced probability $\pM \propto \frac{1}{\Nt}$ that interferers radiate with larger power $\Gt$, increasing $\Nr$ results in a decreased fraction of interferers falling within the receiver main-lobe which is proportional to ${\thetar^{(\mathsf{a})}} \propto \frac{1}{\sqrt{\Nr}}$. 
\begin{figure}
\centering
\includegraphics[totalheight=3.2in, width=3.2in]{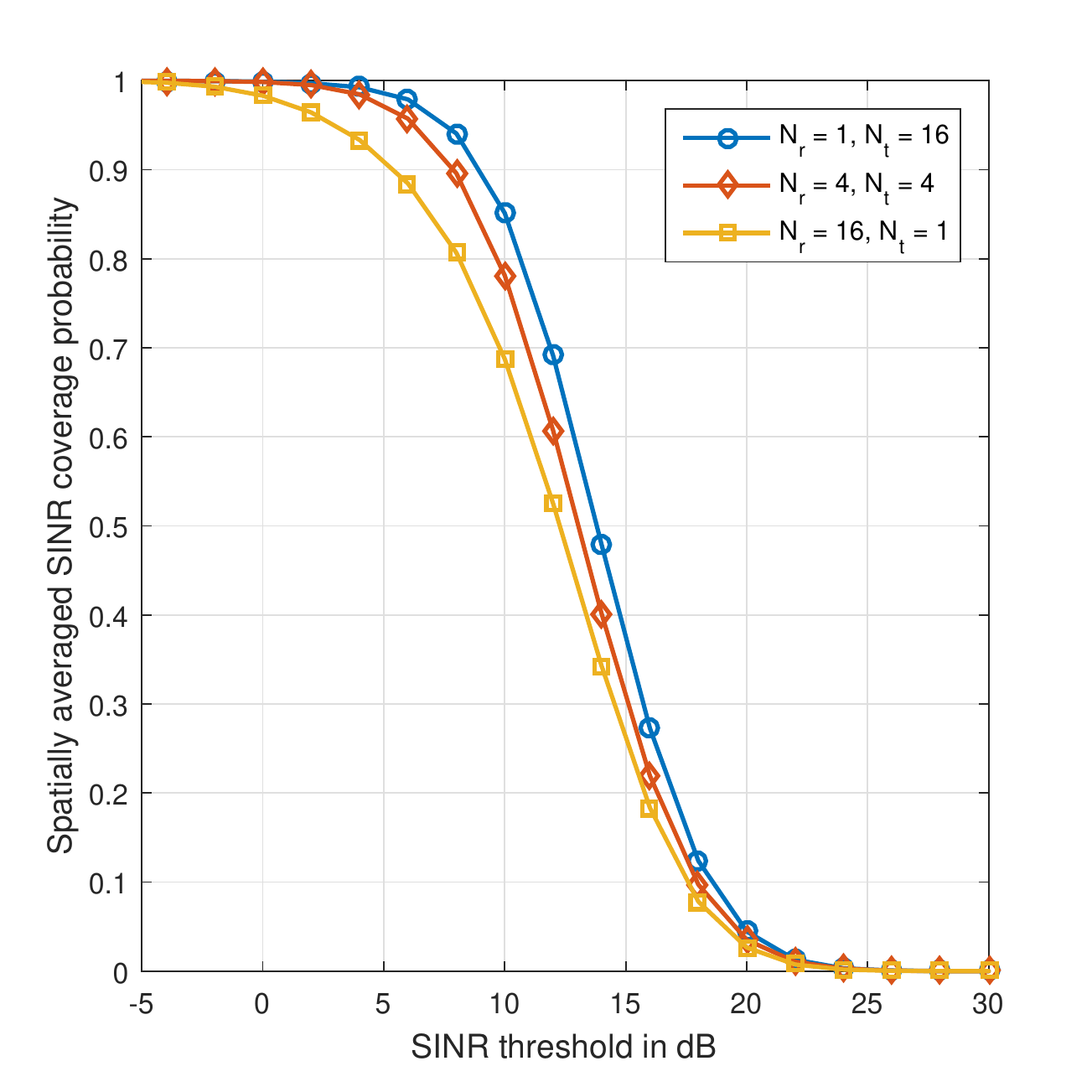}
\vspace{-0.2in}  
\caption{Spatially averaged SINR coverage probability obtained via simulation for three different antenna configurations - $4\times4$, $16\times1$, $1\times16$ and assumption 1 with $\pt = 0.5$. Larger $\Nt$ is advantageous and the performance is not symmetric with respect to $\Nt$ and $\Nr$.}     
\label{fig:coverage_NtXNr} 
\vspace{-0.2in}         
\end{figure}

\begin{table}
\centering
\caption{Spatially averaged ergodic spectral efficiency for various antenna configurations}
\label{table:avgrate_random}
\vspace{-0.1in}
\begin{tabular}{|l||*{3}{c|}}\hline
\backslashbox{$\Nt $}{$\Nr $} & \makebox[2em]{1} & \makebox[2em]{4} & \makebox[2em]{16} \\ \hline
\hline
1 & 0.6465 & 1.7459 & 3.2844 \\
\hline 
4 & 2.0526 & 3.5963 & 5.3523 \\
\hline 
16 & 3.8697 & 5.5886 & 7.4071 \\
\hline 
\end{tabular}
\vspace{-0.2in}
\end{table}

We validate Assumptions 2 -- 4 in Fig. \ref{fig:coverage_Assume}. The plots for the CCDF of spectral efficiency  with and without the assumptions are shown for $\Nt = \Nr = 4$ and $\Nt = \Nr = 16$ with $\pt = 1$. We observe that, though the location of the blockages and the users are dependent in reality (as described by the \textit{orbital model}), the assumptions of treating the blockages and users as two independent BPPs (Assumption 2), associating a distance dependent blockage probability $p_{\mathsf{b}}(r)$ (Assumption 3) and defining the LOS ball (Assumption 4) are all reasonable. The plots of spectral efficiency for each of assumptions 1-4 when $\pt = 1$ are shown in Fig. \ref{fig:coverage_AssumeRate}.
\begin{figure}
\centering
\includegraphics[totalheight=3.2in, width=3.5in]{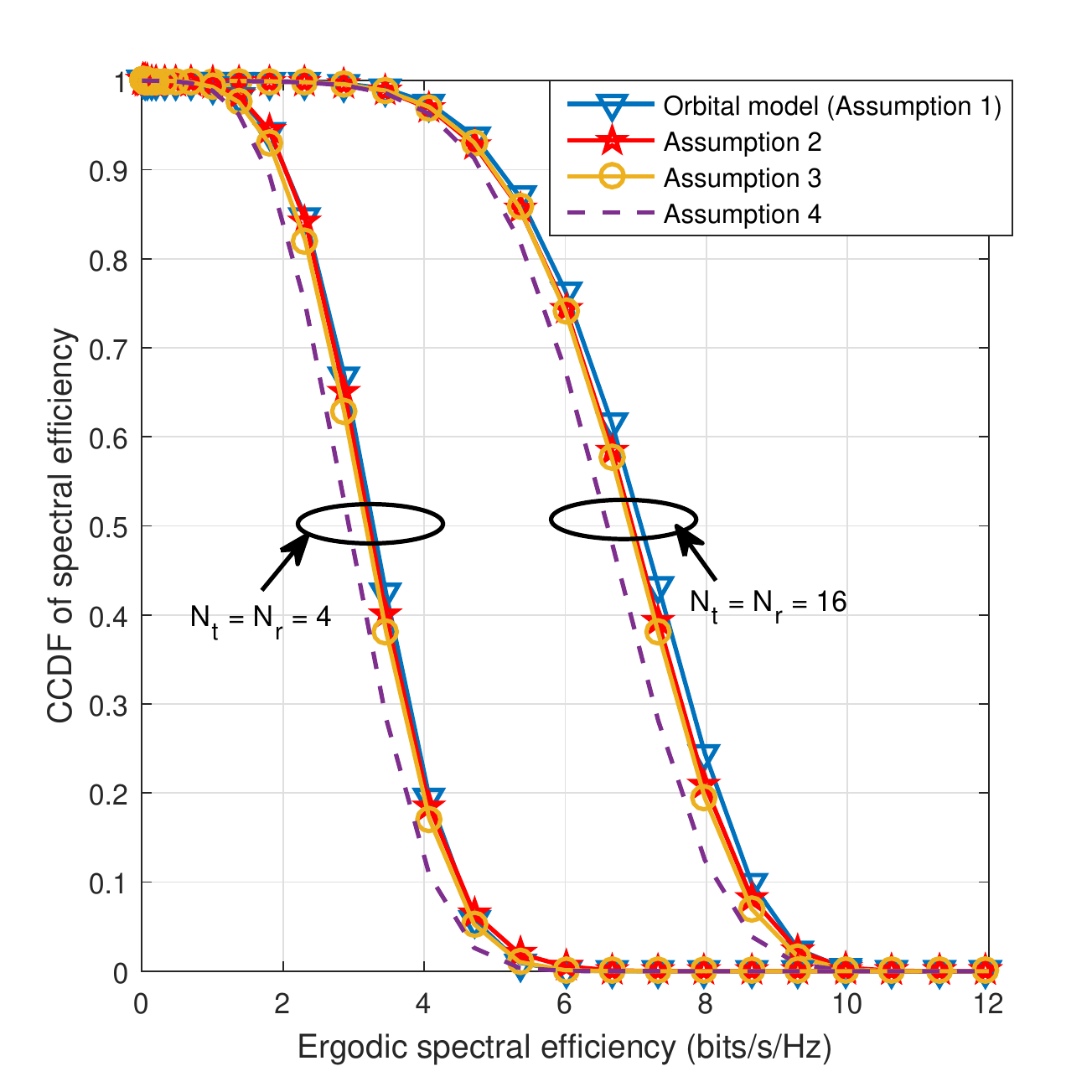}
\vspace{-0.2in}
\caption{CCDF of spatially averaged ergodic spectral efficiency obtained via simulation for various transmitter and receiver antenna configurations $\Nt \times \Nr$ with $\pt = 1$.}
\label{fig:coverage_Assume}
\vspace{-0.2in}        
\end{figure}

\begin{figure}
\centering
\includegraphics[totalheight=3.2in, width=3.5in]{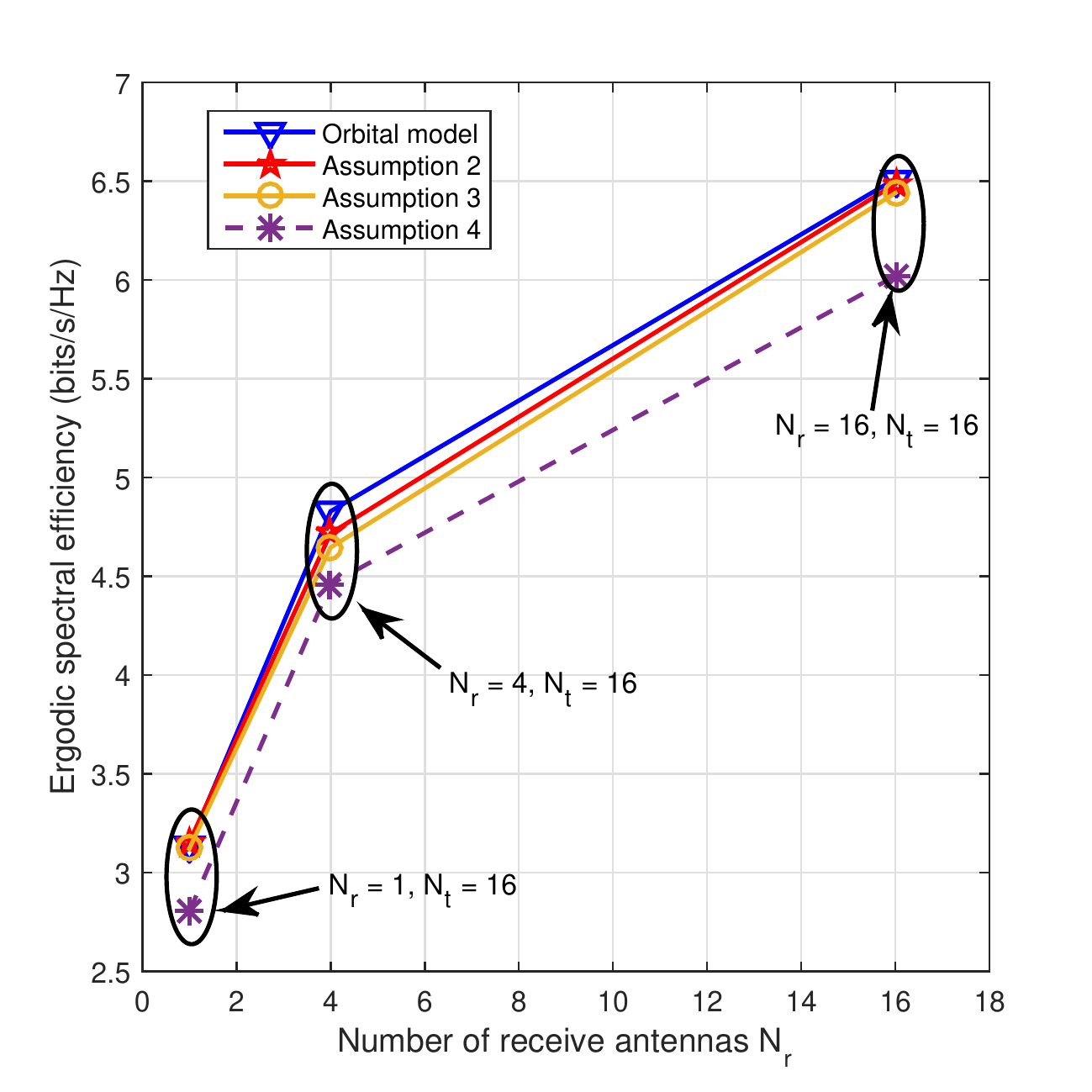}
\vspace{-0.2in}
\caption{Spatially averaged ergodic spectral efficiency from simulation when $\pt = 1$ for various receiver antenna configurations and $\Nt = 16$ with and without Assumptions 2 -- 4 in Section \ref{sec:Assumptions}.} 
\label{fig:coverage_AssumeRate}
\vspace{-0.2in}        
\end{figure}

The plots in Fig. \ref{fig:Analytic_fig} show the CCDF of the SINR obtained using the analytic expressions derived in Section \ref{ssec:Closed_form} and compares it with the actual simulation results under Assumptions 1 and 4. It is seen that the analytic expressions match exactly with the setting under Assumption 4 wherein we consider all the interferers within the LOS ball as unblocked and those outside as blocked from the reference receiver.

\begin{figure}
\centering
\includegraphics[totalheight=3.2in, width=3.5in]{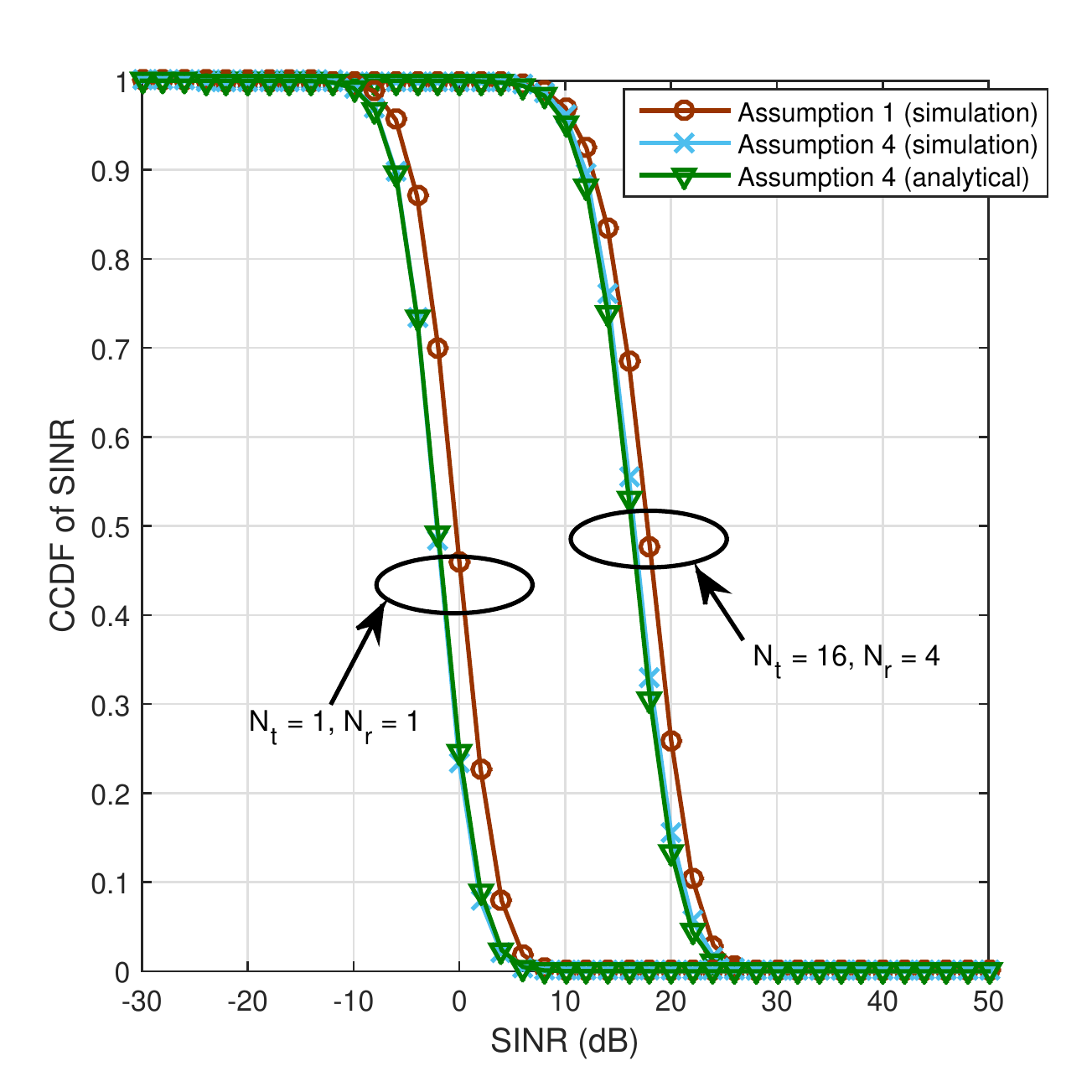}
\vspace{-0.2in}
\caption{Plot showing the CCDF of spatially averaged SINR obtained from simulation and analytic closed-form expressions for different antenna configuration with $\pt = 0.7$.} 
\label{fig:Analytic_fig}
\vspace{-0.2in}        
\end{figure}

Next we look at the dependence of the system performance on $W$, the diameter of the blockages. We define the throughput as the product of $\pt$ and the ergodic spectral efficiency. The plot of throughput versus $W$ in Fig. \ref{fig:TputVsW} shows that as $W$ is increased, the throughput improves. This is because, with larger $W$ and for a fixed interferer density, the interfering signals get more blocked thus improving the SINR. The plots in Fig. \ref{fig:TputVsW} are for $\Nt = \Nr = 4$ and using the analytic expressions derived in Section \ref{ssec:Closed_form}.

\begin{figure}
\centering
\includegraphics[totalheight=3.2in, width=3.5in]{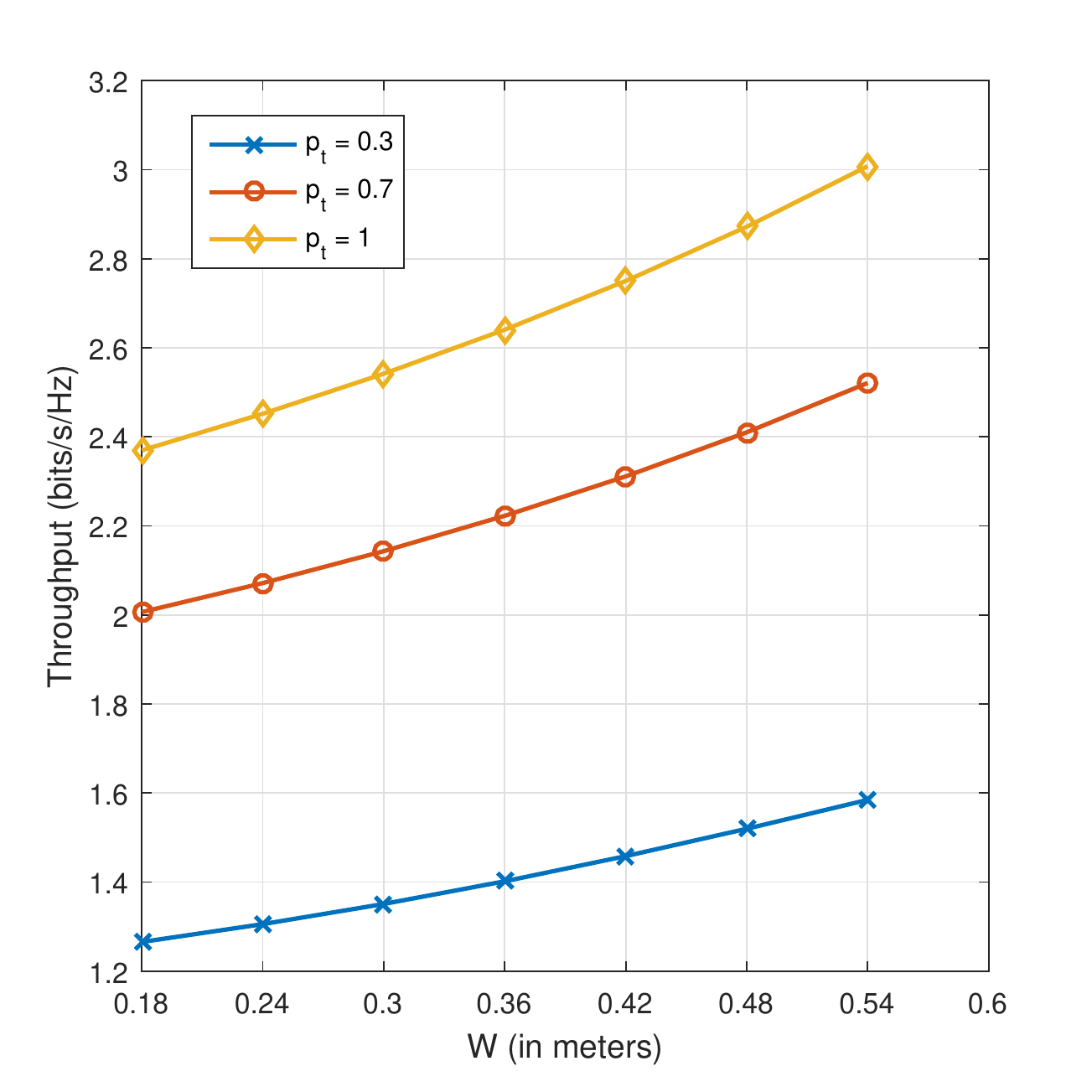}
\vspace{-0.2in}
\caption{Spatially averaged throughput vs. $W$ for different values of random-access probability $\pt$ using the analytic expressions. Larger blockage diameter results in better throughput as the interferers are effectively blocked.} 
\label{fig:TputVsW}
\vspace{-0.2in}
\end{figure}

In Fig. \ref{fig:coverageVsdensity}, the variation of SINR coverage probability is plotted as a function of $\lambda$, the interferer density. We fix $\Nt = 4$, $\Nr = 16$ and $\pt = 1$ for comparison and use the previously derived analytic expressions for the plots. It is seen that as $\lambda$ is increased the SINR coverage decreases rapidly initially. However, with very high density, blocking probability also increases, hence showing lower rate of decrease with increasing $\lambda$ in the plots later on. This also corroborates our assumption that (the users wearing) the interferers are also the source of blockages in the indoor wearables environment.

\begin{figure}
\centering
\includegraphics[totalheight=3.2in, width=3.5in]{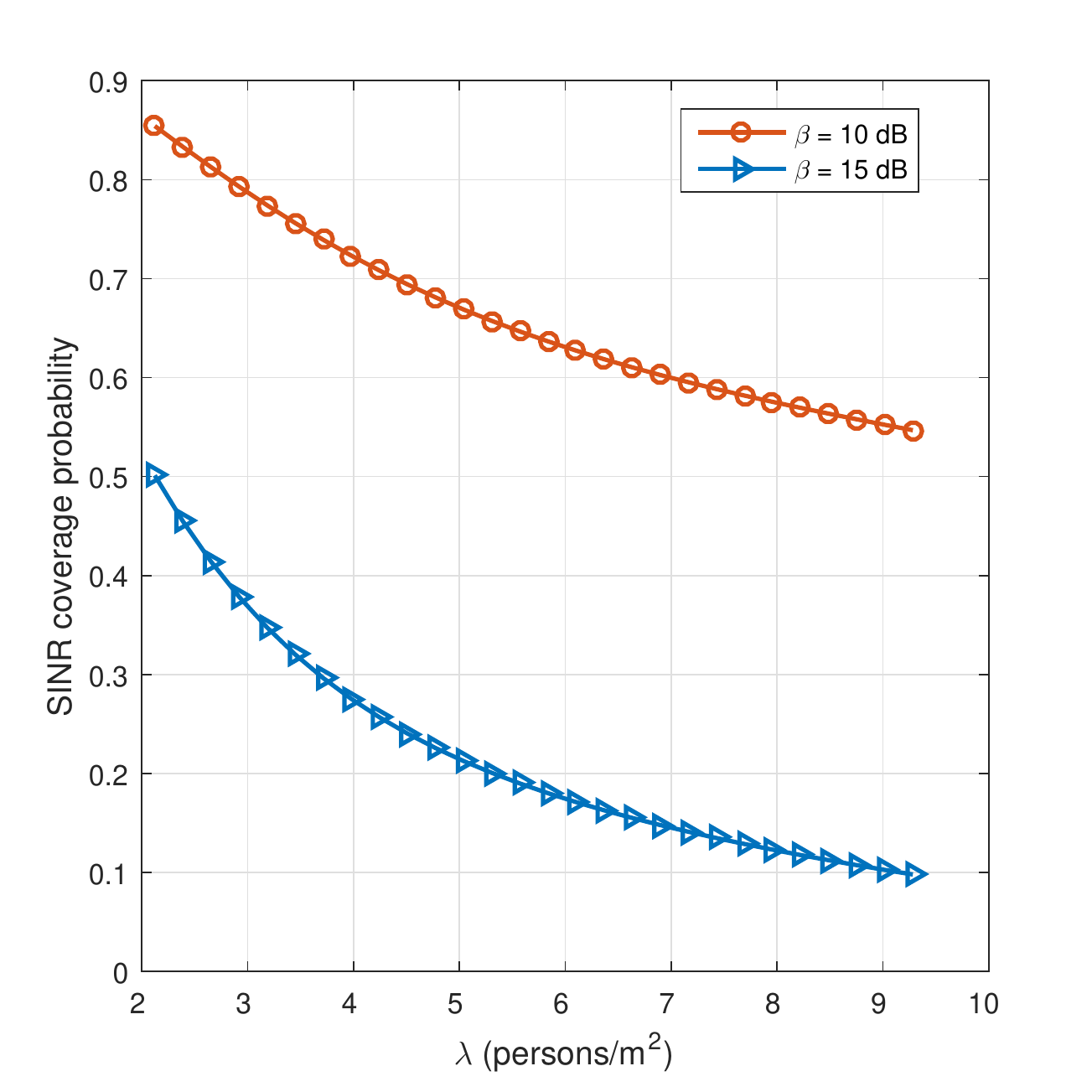}
\vspace{-0.2in}
\caption{Spatially averaged SINR coverage probability vs. $\lambda$ for different values of SINR threshold $\beta$ using the analytic expressions. Here, we let $\Nt = 4$, $\Nr = 16$ and $\pt = 1$.} 
\label{fig:coverageVsdensity}
\vspace{-0.2in}        
\end{figure}

Fig. \ref{fig:coverageVsSNR} shows the variation of ergodic spectral efficiency as we vary $\sigma^2$. Here we let $\Nt = \Nr = 4$ and use the analytic results in Section \ref{ssec:Closed_form}. It is seen that for smaller values of $\sigma^2$, the system is indeed interference limited as changing $\pt$ results in significant change in the system performance. However as $\sigma^2$ is increased, the system becomes noise limited and different values for random-access probabilities of the interferers result in little change in the SINR distribution and hence the ergodic spectral efficiency.
\begin{figure}
\centering
\includegraphics[totalheight=3.2in, width=3.5in]{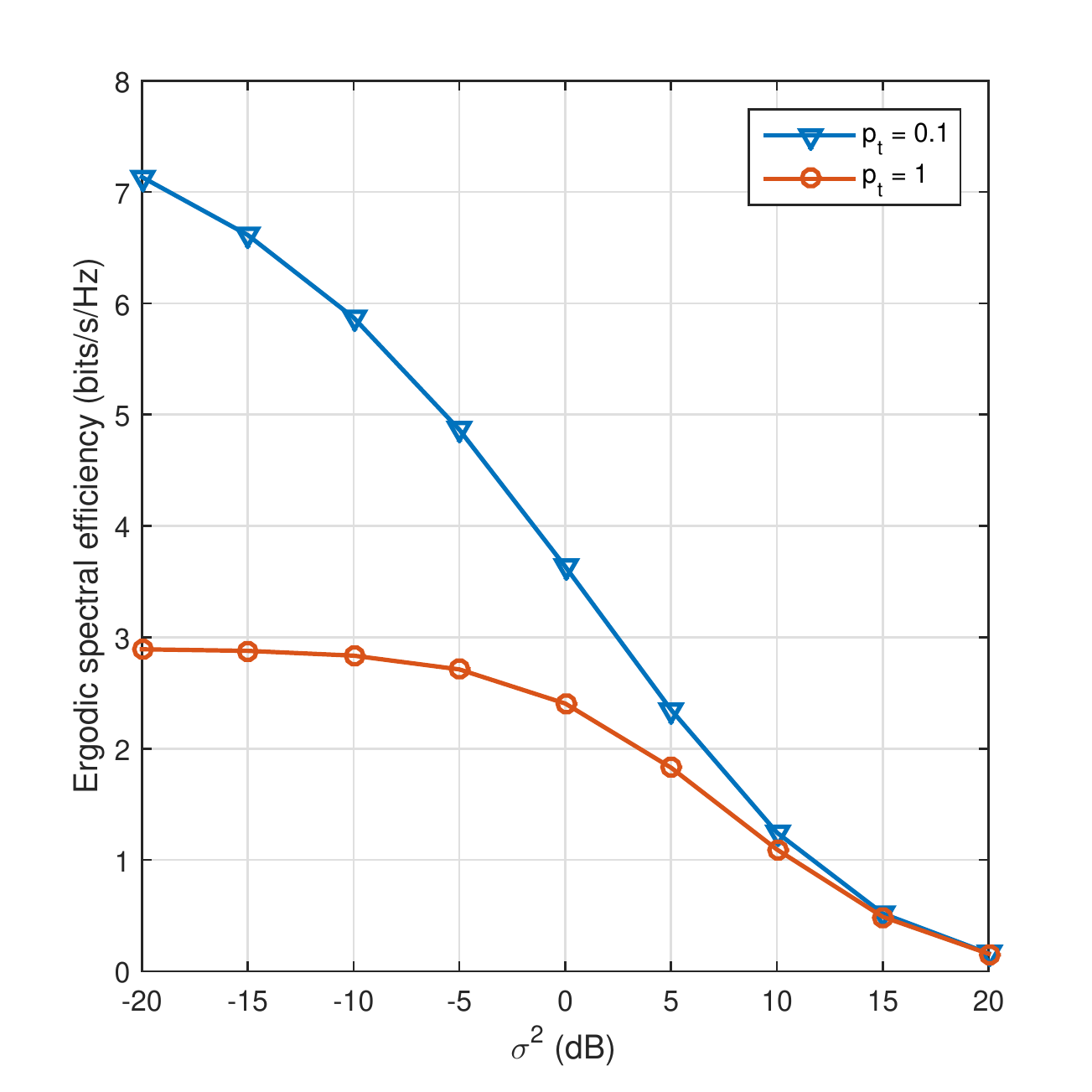}
\vspace{-0.2in}
\caption{Spatially averaged ergodic spectral efficiency vs $\sigma^2$ for two different values of $\pt$ using the analytic expressions when $\Nt = \Nr = 4$.} 
\label{fig:coverageVsSNR}
\vspace{-0.2in}    
\end{figure}

\section{Conclusion}
\label{sec:Concl}
In this paper, we analyzed the performance of a mmWave wearable communication network operating in a finite region like that inside a train car. To model the sensitivity of mmWave signal propagation to the presence of human bodies in the network, we incorporated different path-loss and small-scale fading parameters depending on whether a link is blocked or not. It was seen that both interference and the probability of blockage of the interference signals are large when the crowd density is high, so that the SINR coverage probability decreases at a much lower rate with higher crowd density. We considered fixed as well as random positions for the interfering transmitters and assessed the impact of antenna parameters such as array gain and beamwidth on coverage and ergodic spectral efficiency of the system. It was seen that antenna main-lobe directivity and array gain play a crucial role in achieving giga-bits per second performance for wearable networks in a crowd. We proposed several assumptions and a model to analyze the system performance when the interferers are located at random locations. These gave closed-form expressions for spatially averaged coverage probability for mmWave wearable communication network when the user is located at the center of a dense crowd and the number of users is finite. The validity of the closed-form analytic results and the assumptions were confirmed against simulations. The analytic modeling presented in this paper serves as a first step towards characterizing SINR performance of mmWave based ad-hoc networks in a finite but crowded environment, and enables one to avoid simulations to predict performance. 

For future work, it would be interesting to consider further refinements to the model including incorporation of 3D locations for the devices, and explicit modeling of reflections of the mmWave signals from the boundaries of the finite network region. The work in this paper can easily be extended to the case that the reference link can be blocked by the user's own body.  The procedure would involve finding two conditional outage probabilities, one conditioned on the link not being blocked by the user (using the procedure outlined in this paper) and the other conditioned on the link being blocked by the user (adapting the procedures so that the reference link's path loss is $\alphaN$ and Nakagami factor is $\mN$).  The two probabilities can then be weighted by the probability of self-blockage, which can be determined based on the assumed spatial models. Using a more refined model to capture this self-blockage and incorporating it in the analysis is an interesting topic for future work.

\section{Acknowledgement}
The authors would like to thank Salvatore Talarico for his programming assistance and Geordie George for his discussion on antenna gain pattern modeling.

\appendices

\section{PROOF OF LEMMA \ref{lemma:snowboard}}
\label{Appdx:proof_snowboard}
The blockages are drawn from a BPP. Consider a transmitter $X_i$ located at distance $|X_i| = r$ from the reference receiver.  Its signal will be blocked if there is a blockage inside a certain subregion of $\mathcal A$, which we will call the \emph{blocking region} of $X_i$ (i.e., $X_i$ is blocked if there is an object in its blocking region).  Since $\mathcal{A}$ is a circular disk with inner radius $\rin$ and outer radius $\rout$, the blocking region looks like Fig. \ref{fig:blocking_region1} if $\rin \leq r \leq \rout - \frac{W}{2}$ and like Fig. \ref{fig:blocking_region2} if $\rout - \frac{W}{2} < r \leq \rout$. 

\begin{figure}
\centering
\subfigure[Case when $\rin \leq r \leq \rout - \frac{W}{2}$]{
\includegraphics[scale=0.4]{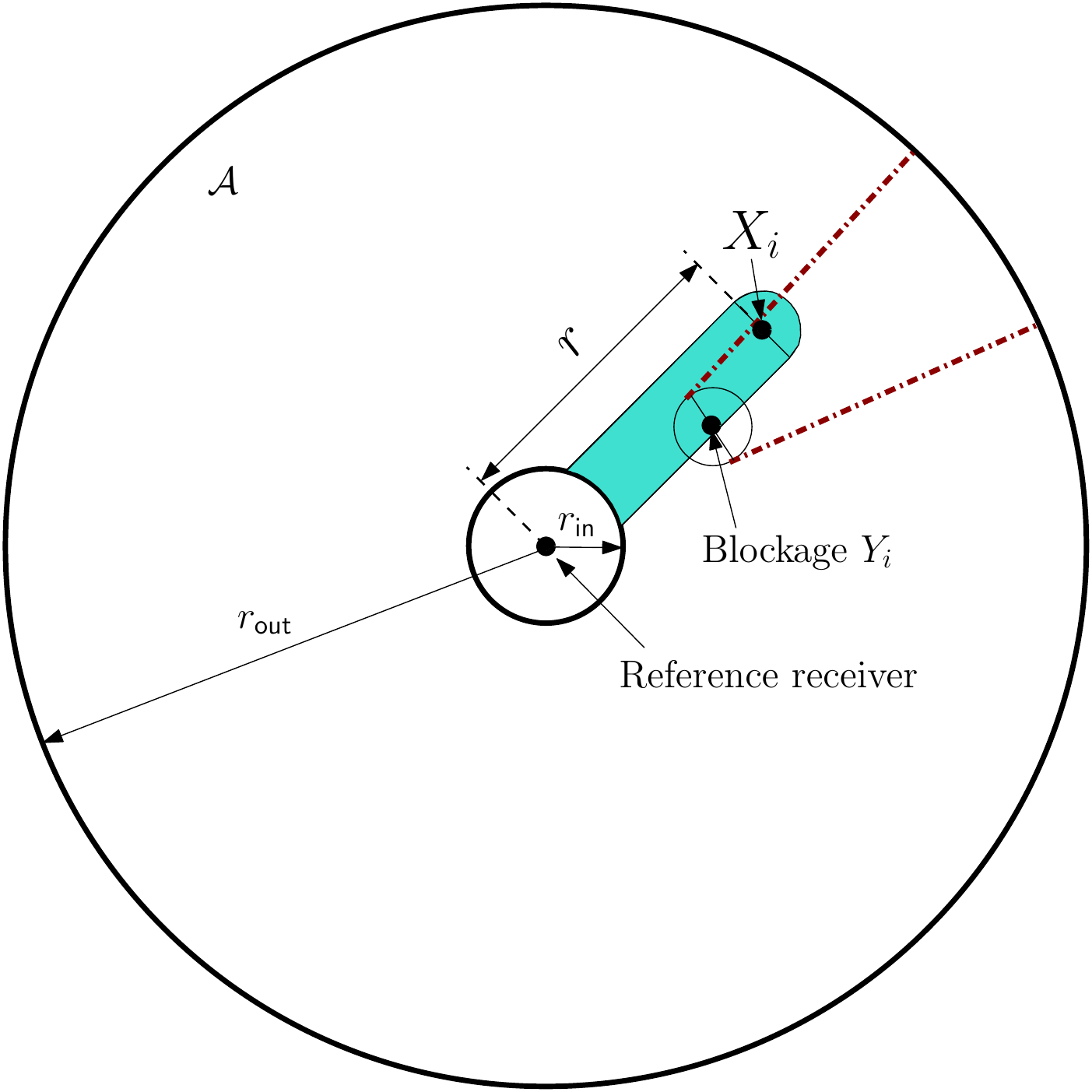}
\label{fig:blocking_region1}}
\hspace{.5in}
\subfigure[Case when $\rout - \frac{W}{2} < r \leq \rout$]{
\includegraphics[scale=.4]{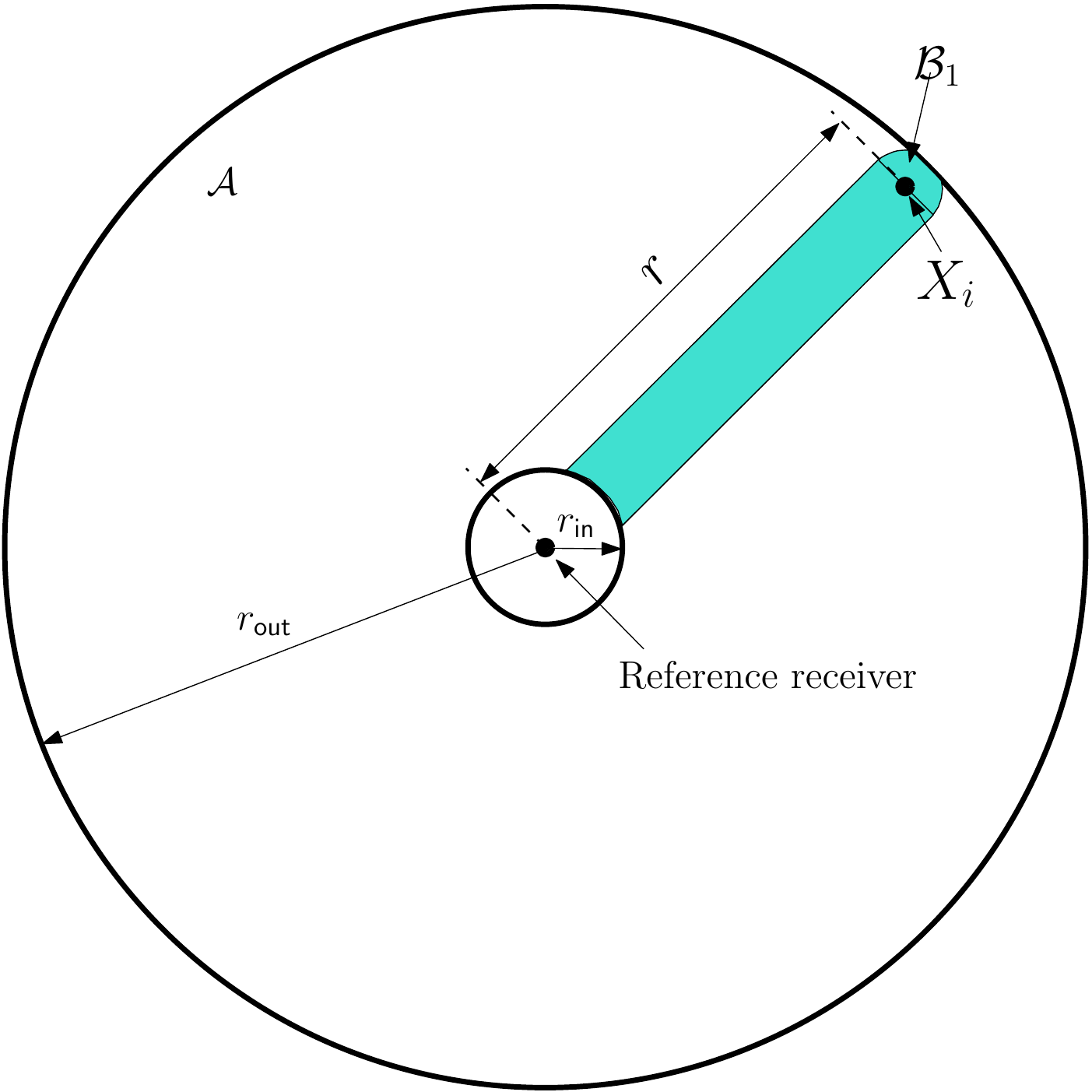}
\label{fig:blocking_region2}}
\caption{Figure showing the blocking region for interferer $X_i$ with $|X_i| = r$ for two different cases. The blocking cone of a blockage $B_i$ of diameter $W$ that lies within the blocking region of $X_i$ is also shown for illustration}     
\label{fig:blocking_zone}        
\end{figure}
Let $p_{\mathsf{b}}(i,j)$ be the probability that blockage $B_j$ blocks transmitter $X_i$ with $|X_i| = r$.  Since the blockages are placed uniformly at random, the probability that the blockage is inside the blocking region is equal to the ratio of the area of the corresponding blocking region and the overall network. For ${\rin} \leq r \leq {\rout} - \frac{W}{2}$, the area of the blocking region can be evaluated as follows. The area can be split into regions as shown in Fig. \ref{fig:app_proof1}, where region $\mathcal{A}_1$ is a sector of the circle with radius $\rin$ and subtended angle $\psi = 2\arcsin \frac{W}{2\rin}$. Region $\mathcal{A}_2$ corresponds to two identical right triangles with base length $W/2$ and height $\sqrt{\rin^2 - \left(\frac{W}{2}\right)^2}$ and region $\mathcal{A}_3$ is a semicircular disk of radius $W/2$. Hence, the area of the shaded region in Fig. \ref{fig:blocking_region1} is $rW + \frac{\pi W^2}{8} - |\mathcal{A}_1| - 2|\mathcal{A}_2| $, where $|{\mathcal{A}}_2| = \frac{W}{4}\sqrt{{\rin}^2 - \left(\frac{W}{2}\right)^2} $ and $|{\mathcal{A}}_1| = {\rin}^2\arcsin\left(\frac{W}{2{\rin}}\right)$.
\begin{figure}
\centering
\includegraphics[height = 2in, width = 2in]{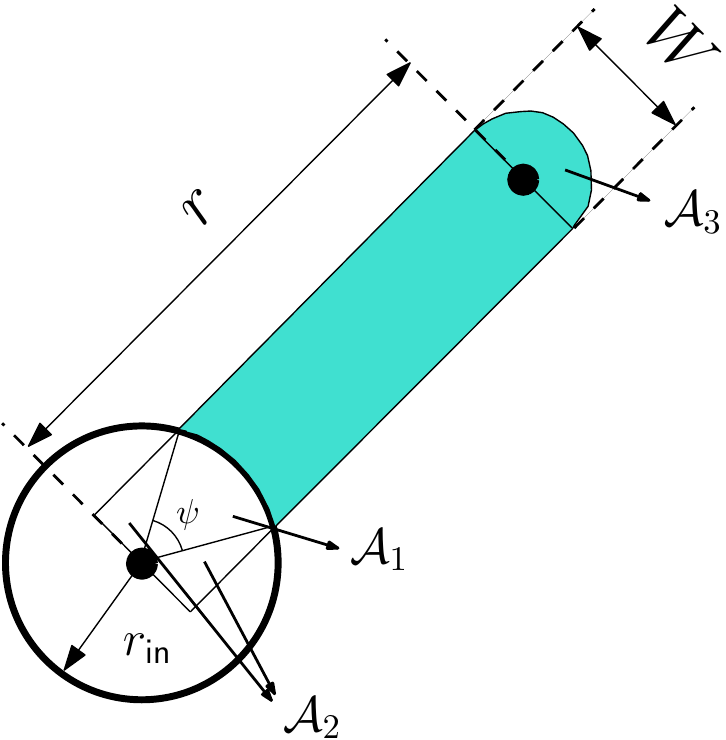}
\caption{Evaluation of area of blocking region for ${\rin} \leq r \leq {\rout} - \frac{W}{2}$}
\label{fig:app_proof1}
\vspace{-0.2in}
\end{figure}
For the two cases shown in Fig. \ref{fig:blocking_zone}, we would then have,
\be
p_{\mathsf{b}}(i,j)
   & = &
   \begin{cases}
        \frac{rW - \frac{W}{2}\sqrt{{\rin}^2 - \left(\frac{W}{2}\right)^2}-{\rin}^2\arcsin\left(\frac{W}{2{\rin}}\right) + \frac{\pi W^2}{8} }{|\mathcal{A}|}   & \mbox{if ${\rin} \leq r \leq {\rout} - \frac{W}{2}$} \\
         \frac{rW - \frac{W}{2}\sqrt{{\rin}^2 - \left(\frac{W}{2}\right)^2}-{\rin}^2\arcsin\left(\frac{W}{2{\rin}}\right) + \nu }{|\mathcal{A}|} & \mbox{if ${\rout} - \frac{W}{2} \leq r \leq {\rout}$} 
   \end{cases}, \label{Equation:Pairwise_block}
\ee where
\be
\nonumber \nu &=& \left(\frac{W}{2}\right)^2\arcsin\left(\frac{{\rout}^2 - \left(\frac{W}{2}\right)^2 - r^2}{rW}\right)+ {\rout}^2\arccos\left(\frac{{\rout}^2 - \left(\frac{W}{2}\right)^2 + r^2}{2r{\rout}}\right)\\
&&~~~~~~~~~~~~~~~~~~~~~-2\sqrt{s(s-r)(s-\frac{W}{2})(s- \frac{\rout}{2})}; s = \frac{{\rout}+r+W/2}{2} \nonumber
\ee is the area of region $\mathcal{B}_1$ indicated in Fig. \ref{fig:blocking_region2}.
Since the blockages are independent, the transmitter will be blocked if there are \emph{any} blockages --- or, equivalently, will not be blocked only if there are no blockages in its blocking region,  Thus, the probability that $X_i$ located at $|X_i| = r$ is blocked is
\be
p_{\mathsf{b}}(r)
& = &
1 - \prod_{j=1}^{K} \left( 1 -  p_{\mathsf{b}}(i,j) \right),
\ee resulting in the form given in Lemma \ref{lemma:snowboard}.


\bibliographystyle{ieeetr}

\end{document}